\newcommand{\dt}{\partial_t}
\newcommand{\dv}{\mathrm{div}\,}
\newcommand{\dvh}{\mathrm{div}_h\,}
\newcommand{\idx}[0]{\,dx}
\newcommand{\subeqref}[2]{$\eqref{#1}_{#2}$}
\newcommand{\norm}[2]{\Arrowvert #1 \Arrowvert_{#2}}
\newcommand{\Lnorm}[1]{L^{#1}(\Omega)}
\newcommand{\Hnorm}[1]{H^{#1}(\Omega)}
\newcommand{\bLnorm}[1]{L^{#1}(\Gamma)}
\newcommand{\bHnorm}[1]{H^{#1}(\Gamma)}
\renewcommand{\vec}[1]{\mathbf{#1}}
\newcommand{\mrm}[1]{\mathrm{#1}}
\newcommand{\pmindex}{{\mrm j \in \lbrace+,-\rbrace}}
\newtheorem{thm}{Theorem}[section]
\theoremstyle{remark}
\newtheorem{remark}{Remark}
\numberwithin{equation}{section}
\title{On the resolvability of the dynamics of one fluid flow via a testing fluid in a two-fluids flow model}
\author{Xin Liu
\footnote{Weierstrass-Institut
f\"ur Angewandte Analysis und Stochastik,
Leibniz-Institut im Forschungsverbund Berlin, Berlin Germany.
Email: stleonliu@gmail.com\, and \, stleonliu@live.com.}}
\begin{document}
\allowdisplaybreaks
\maketitle

\begin{abstract}
	In this work, by considering an isentropic fluid-fluid interaction model with a large symmetric drag force, a commonly used simplified two-fluids flow model is justified as the asymptotic limit. Equations for each fluid component with an interaction term are identified in addition to the simplified two-fluids flow model, which can be used to resolve the density of one fluid specie based on information on the density and the velocity of the other fluid specie, i.e., the testing flow. 
	{\par\noindent\bf Keywords:} Asymptotic limit; Two-fluids flow model; Fluid-fluid interaction.
	{\par\noindent\bf MSC2020: } 35Q30, 76N06, 76T17.
\end{abstract}

\tableofcontents

\section{Introduction}

The goal of this paper is to investigate the asymptotic limit $ \sigma \rightarrow \infty $ of system
\begin{equation}\label{eq:2CMPNS}
	\begin{cases}
		\dt \rho_{\sigma,\pm} + \dv (\rho_{\sigma,\pm} \vec u_{\sigma,\pm} ) = 0, \\
		\dt (\rho_{\sigma,\pm} \vec u_{\sigma,\pm} ) + \dv (\rho_{\sigma,\pm} \vec u_{\sigma,\pm} \otimes \vec u_{\sigma,\pm} ) + \nabla p_{\sigma,\pm} = \dv \mathbb S_{\sigma,\pm} \\
		\qquad + \sigma \rho_{\sigma,\pm} \rho_{\sigma,\mp} (\vec u_{\sigma,\mp} - \vec u_{\sigma,\pm}),
	\end{cases}
\end{equation}
in $ \Omega = \mathbb T^3 $ or $ \mathbb T^2\times (0,1) $ with proper boundary conditions to be specified, 
where
\begin{equation}\label{def:pressure-viscosity}
\begin{gathered}
	p_{\sigma,\pm} = R_\pm \rho_{\sigma,\pm}^{\gamma_\pm} , \\
	\mathbb S_{\sigma,\pm} = \mu_\pm (\nabla \vec u_{\sigma,\pm} + \nabla^\top \vec u_{\sigma,\pm}) + \lambda_\pm  \dv \vec u_{\sigma,\pm} \mathbb I_3.
\end{gathered} 
\end{equation}
Here 
\begin{equation}\label{asm:general}
	R_\pm > 0, ~ \gamma_\pm > 1, ~ \mu_\pm > 0, ~ \dfrac{2}{3} \mu_\pm + \lambda_\pm > 0, ~ \sigma > 0,
\end{equation}
are constants, representing the gas constants, the adiabatic constants, the shear viscosity coefficients, the bulk viscosity coefficients, and the interaction coefficients, respectively. 

We verify rigorously that, the asymptotic limit of solutions $ (\rho_{\sigma,\pm},\vec u_{\sigma,\pm}) $ to \eqref{eq:2CMPNS}, as $ \sigma \rightarrow \infty $, will converge to $ (\rho_\pm, \vec u) $, which is the solution to the following two-fluids flow model,
\begin{equation}\label{eq:limiting-eq}
	\begin{cases}
	\dt \rho_\pm + \dv (\rho_{\pm} \vec u) = 0,\\
	\dt \bigl((\rho_+ + \rho_-) \vec u \bigr) + \dv \bigl((\rho_+ + \rho_-) \vec u \otimes \vec u \bigr) + \nabla p = \dv \mathbb S,
	\end{cases}
\end{equation}
where, formally,
\begin{gather*}
	\rho_{\pm} := \lim\limits_{\sigma \rightarrow \infty} \rho_{\sigma, \pm}, \quad \vec u_{\pm} := \lim\limits_{\sigma \rightarrow \infty} \vec u_{\sigma, \pm},
\end{gather*}
under some appropriate assumptions, and
\begin{equation}\label{def:pressure-viscosity-limiting}
	\begin{gathered}
		p = \sum_{\pmindex} R_\mrm j \rho_\mrm j^{\gamma_\mrm j} , \\
		\mathbb S = \sum_{\pmindex} (\mu_\mrm j(\nabla \vec u + \nabla^\top \vec u) + \lambda_\mrm j \dv \vec u \mathbb{I}_3).
	\end{gathered}
\end{equation}

As a byproduct, we discover a method to resolve the density of one fluid from the density and the velocity of the other fluid, i.e., the testing fluid, in the mixture of two fluids. More precisely, the equations,
\begin{equation}\label{eq:conv-3}
	\begin{gathered}
	\dt (\rho_{\pm}\vec u) + \dv (\rho_{\pm}\vec u\otimes \vec u) + \nabla (R_\pm \rho_{\pm}^{\gamma_\pm })\\
	 = \dv(\mu_\pm (\nabla \vec u + \nabla^\top \vec u) + \lambda_\pm \dv \vec u\mathbb I_3 ) 
	 \pm \rho_{\pm}\rho_{\mp} \mathfrak S,
	 \end{gathered}
\end{equation}
are identified, where $ \mathfrak S $ is a quantity relating the two densities $ \rho_+, \rho_- $ and the velocity $\vec u $ (see \eqref{def:interaction}, below). Thus, provided with information of the testing fluid, $(\rho_+,\vec u)$ for instance, one can calculate $ \rho_- \mathfrak S $ from \eqref{eq:conv-3}. From there, one can identify the density of the other flow, i.e., $\rho_-$ in this example.

System \eqref{eq:limiting-eq} is a simplified two-fluids flow model. The idea of multi-components of the fluid sharing the same aligned velocity is commonly used in a lot of applications. For instance, in the study of atmosphere dynamics, dry air, water vapor, and cloud water are driven by the same velocity, which is determined by a single momentum equation similar to \eqref{eq:limiting-eq}. See, e.g., \cite{Hittmeir2018,Klein2006}. This system also serves as a plausible model for the study of particle/fluid interaction. See, e.g., \cite{Mellet2008,Evje2011}, and the references therein. 

In particular, in \cite{Mellet2008}, starting from the Vlasov-Fokker-Planck/Navier-Stokes system, the authors investigate the asymptotic with a strong drag force and a strong Brownian motion, where the limiting system is similar to \eqref{eq:limiting-eq}. However, the drag force is taken to be asymmetric. In fact, as explained in Remark 1.1 in \cite{Mellet2008}, the drag force was taken as $ F_d = F_0(\vec u- \vec v) $, where $ F_0 $ is a constant, and the more physical relevant one should be taken as $ F_d = \rho (\vec u - \vec v) $, which we will refer to as the symmetric drag force. 

Formally taking the hydrodynamic limit of the Vlasov-Fokker-Planck/Navier-Stokes system, with a symmetric drag force, one will end up with system \eqref{eq:2CMPNS}, while in the case of an asymmetric drag force in the kinetic-hydrodynamic system, it leads to a two-phase fluid model similar to \eqref{eq:2CMPNS}  but with the drag force equal to $ \sigma \rho_{\sigma,+}(\vec u_{\sigma,\mp} - \vec u_{\sigma,\pm}) $, i.e., an asymmetric drag force. See \cite{ChoiSIAM2016} for the formal derivation of the hydrodynamic limit. We will use the terminology of symmetric or asymmetric drag forces for hydrodynamic systems as well as kinetic-hydrodynamic systems.

In addition, from the point of view of hydrodynamics, system \eqref{eq:2CMPNS} can be seen as a model of the mixture of two fluids, while each one of the fluids acts as a porous media to the other, and neither of the fluids is dominating the other. Therefore, the drag force should be symmetric.

Due to the low regularity of solutions studied in \cite{Mellet2008}, the lower bound of density is not {\it a priori} known. Therefore only large asymmetric drag force was considered. In this work, we want to investigate the large drag force limit in the setting of hydrodynamic systems, with the symmetric drag force, in a more regular functional setting. We remark that it would be interesting to investigate the large, symmetric drag force limit in the setting of kinetic-hydrodynamic systems. 

Among a large amount of literatures concerning two-phase fluid models, we will only mention a few in the following, and refer interested readers to the references therein. Modeled by the Vlasov-Fokker-Planck/Navier-Stokes system, Mellet and Vasseur construct the global weak solution in \cite{Vasseur2007M3AS}, while the asymptotic limit of a large drag force and a large Brownian motion is investigated in \cite{Mellet2008}. Without fluid viscosities, the local well-posedness of strong solutions is studied in \cite{Baranger2006}. Global classical solutions near equilibrium, as well as the decaying rates of perturbations, are studied in \cite{Chae2013,LiMuWangSIAM2017} in the present of fluid viscosities. 

Concerning the limiting equations in \cite{Mellet2008}, the existence of global weak solutions is established in \cite{Vasseur2019}. See \cite{Wen2019} for the recent improvement. We also refer to \cite{Evje2017,Evje2011}, as well as the references therein, for early mathematical developments as well as the physical importance of this two-fluids flow model. 

Near equilibrium but with an asymmetric drag force, global existence of classical solutions is investigated in \cite{ChoiSIAM2016}. We refer readers to \cite{Bresch2010,BreschHuangLiCMP2012,Novotny2020,Evje2016,Bresch2011} for a more general two-fluids flow model, where volumetric rates are taken into account, and related studies.

On the other hand, the mathematical study of mono-fluid, i.e., with one specie of fluid, has developed fruitful results. To name a few, starting with \cite{Feireisl2004,Lions1998}, the authors construct the well-known Lions-Feireisl weak solutions to compressible Navier-Stokes equations. The theory of local well-posedness of strong solutions with vacuum density profiles is established in \cite{Cho2004,Cho2006a,Cho2006c}. Global existence in the framework of perturbation is obtained in \cite{Matsumura1980,Matsumura1983}. The blow-up and non-existence to full compressible Navier-Stokes system with vacuum and bounded entropy are shown in \cite{XinYan2013,zpxin1998,Li2017b}. We also refer readers to \cite{Bresch2007,Mellet2007,Vasseur2016,HuangLiXin2012,Li2015a} for other important developments in this direction. 

The rest of this paper is organized as follows. In the next section, we collect the notations we will be used in this paper and state the main theorems. Section \ref{sec:t3} and section \ref{sec:flat} are devoted to establish the uniform-in-$\sigma$ estimates, which will be the centerpiece of our analysis. In section \ref{sec:limit}, we pass the limit $ \sigma \rightarrow \infty $, which yields the results in this paper. 

\section{Preliminaries and main theorems}

We investigate our problem in domain $ \Omega \subset \mathbb R^3 $, where
\begin{equation*}
	\Omega = \lbrace (x,y,z)^\top \in \mathbb T^3 \rbrace \quad \text{or} \quad \lbrace (x,y,z)^\top | (x,y)^\top \in \mathbb T^2, z \in (0,1) \rbrace.
\end{equation*}
$ \nabla $, $ \dv $, and $ \Delta $ represent the gradient, the divergence, and the Laplace operators, respectively. Meanwhile, $ \nabla_h $, $ \dv_h $, and $ \Delta_h $ are the gradient, the divergence, and the Laplace operators in the horizontal (first two) variables, respectively, i.e.,
\begin{equation*}
	\nabla_h := \left( \begin{array}{c}
		\partial_x \\ \partial_y
	\end{array}\right), \quad \dvh := \nabla_h \cdot, \quad \text{and} \quad \Delta_h := \dvh \nabla_h. 
\end{equation*}

We use $ \norm{\cdot}{X} $ to denote the norm of functional space $ X $. We shorten the notation $ \norm{f}{X} + \norm{g}{X} + \cdots $ to $ \norm{f,g,\cdots}{X} $ for norms of multiple functions.
$ \int f \idx  := \int_\Omega f \idx $
represents the integration in the spatial variables. 

$ (\rho_{\sigma,\pm,0},\vec u_{\sigma,\pm,0}) $ are used to represent the initial data of \eqref{eq:2CMPNS}, i.e.,
\begin{equation}\label{def:initial}
	(\rho_{\sigma,\pm},\vec u_{\sigma,\pm})\big|_{t = 0} = (\rho_{\sigma,\pm,0},\vec u_{\sigma,\pm,0}).
\end{equation}
Let 
\begin{equation}\label{def:lowerbound-rho}
	\underline{\rho} := \inf_{x\in\Omega} \lbrace\rho_{\sigma,+,0},\rho_{\sigma, -,0} \rbrace > 0,
\end{equation}
be the strict positive lower bound of initial densities.

In the case of $ \Omega = \mathbb T^2 \times (0,1) $, we impose the impermeable and complete slip boundary conditions:
\begin{equation}\label{bc:flat}
	\vec \tau \cdot \mathbb{S}_{\sigma,\pm}\vec{n}\big|_{z=0,1} = 0, \quad \vec u_{\sigma,\pm} \cdot \vec n \big|_{z=0,1} = 0,
\end{equation}
where
\begin{equation*}
	\vec \tau \in \left\lbrace
	\left( \begin{array}{c}
	1\\0\\0
	\end{array}\right), \quad 
	\left( \begin{array}{c}
	0\\1\\0
	\end{array}\right)
	\right\rbrace, \quad 
	\vec n = \left( \begin{array}{c}
	0\\0\\1
	\end{array}\right).
\end{equation*}
Equivalently, \eqref{bc:flat} can be written as
\begin{equation*}\tag{\ref{bc:flat}}
	\partial_z \vec v_{\sigma,\pm}\big|_{z=0,1} = 0, \quad w_{\sigma,\pm}\big|_{z=0,1} = 0,
\end{equation*}
where $ \vec v_{\sigma,\pm} $ and $ w_{\sigma,\pm} $ are the horizontal and vertical components of $ \vec u_{\sigma,\pm} $, respectively, i.e., $ \vec u_{\sigma,\pm}:= (\vec v_{\sigma,\pm},w_{\sigma,\pm})^\top $.

Notice, System \eqref{eq:2CMPNS} with either of the above boundary conditions, admits the following conservation/balance laws:
\begin{align}
	\text{Conservation of mass:} & \quad \dfrac{d}{dt} \int \rho_{\sigma,\pm} \idx = 0, \label{cnsvt:mass} \\
	\text{Conservation of total momentum:} & \quad \dfrac{d}{dt} \int \sum_{\pmindex}\rho_{\sigma, \mrm j} \vec u_{\sigma, \mrm j} \idx = 0, \label{cnsvt:momentum} \\
	\text{Balance of total energy:} & \quad \dfrac{d}{dt} E_\sigma + D_\sigma \nonumber \\
	&  + \sigma \int \rho_{\sigma, +} \rho_{\sigma,-}|\vec u_{\sigma,+} - \vec u_{\sigma,-}|^2\idx =0 \label{cnsvt:energy} ,
\end{align}
where the energy and the dissipation are defined as
\begin{align}
	E_\sigma & := \int \sum_{\pmindex}\biggl( \dfrac{1}{2} \rho_{\sigma,\mrm j} |\vec u_{\sigma,\mrm j}|^2 + \dfrac{R_\mrm j}{\gamma_\mrm j - 1} \rho_{\sigma, \mrm j}^\gamma  \biggr) \idx, \\
	D_\sigma & := \int \sum_{\pmindex}\biggl( \dfrac{\mu_\mrm j}{2} |\nabla \vec u_{\sigma,\mrm j} + \nabla^\top \vec u_{\sigma,\mrm j}|^2 + \lambda_\mrm j |\dv \vec u_{\sigma,\mrm j}|^2 \biggr)\idx.
\end{align}

Now we state our first main theorem in this paper:
\begin{thm}\label{thm:1}
	{\par\noindent\bf Case 1.} In the case of $ \Omega = \mathbb T^3 $, assume that
	\begin{equation}\label{asm:initial}
			\underline{\rho} > 0, \quad \rho_{\sigma,\pm,0} \in H^2(\Omega), \quad \vec u_{\sigma,\pm,0}\in H^2(\Omega).
	\end{equation}
	There exists $ T \in (0,\infty) $, depending only on the initial data and independent of $ \sigma $, such that $ \inf_{\Omega, 0\leq t \leq T}\rho_{\sigma,\pm} \geq \underline{\rho}/2 $ and
	\begin{equation}\label{fnt-space}
		\begin{gathered}
			\rho_{\sigma,\pm} \in L^\infty(0,T;H^2(\Omega)), \quad \dt \rho_{\sigma,\pm} \in L^\infty(0,T;L^2(\Omega)), \\
			\vec u_{\sigma,\pm} \in L^\infty(0,T;H^2(\Omega))\cap L^2(0,T;H^3(\Omega)), \\
			\dt \vec u_{\sigma,\pm} \in L^\infty(0,T;L^2(\Omega)) \cap L^2(0,T;H^1(\Omega)).
		\end{gathered}
	\end{equation}
	Moreover, it holds
	\begin{equation}\label{thm:est}
		\begin{aligned}
		& \sup_{0\leq s\leq T} \biggl\lbrace \norm{\rho_{\sigma,\pm}(s)}{\Hnorm{2}}^2 + \norm{\dt \rho_{\sigma,\pm}(s)}{\Lnorm{2}}^2 + 
		\norm{\vec u_{\sigma,\pm}(s)}{\Hnorm{2}}^2 \\
		& \qquad
		 + \norm{\dt \vec u_{\sigma,\pm}(s)}{\Lnorm{2}}^2 
		 + \sigma^2 \norm{(\vec u_{\sigma,+} - \vec u_{\sigma,-})(s)}{\Lnorm{2}}^2 \\
		 & \qquad
		 + \sigma \norm{(\vec u_{\sigma,+} - \vec u_{\sigma,-})(s)}{\Hnorm{1}}^2 \biggr\rbrace + \sigma^2 \int_0^T \norm{(\vec u_{\sigma,+} - \vec u_{\sigma,-})(s)}{\Hnorm{1}}^2 \,ds\\
		& \quad + \int_0^T \biggl\lbrace \norm{\vec u_{\sigma,\pm}(s)}{\Hnorm{3}}^2 + \norm{\dt \vec u_{\sigma,\pm}(s)}{\Hnorm{1}}^2 \biggr\rbrace \,ds \\
		& \quad + \sigma \int_0^T \biggl\lbrace \norm{(\vec u_{\sigma,+}-\vec u_{\sigma,-})(s)}{\Hnorm{2}}^2 +  \norm{\dt (\vec u_{\sigma,+}-\vec u_{\sigma,-})(s)}{\Lnorm{2}}^2 \biggr\rbrace \,ds\\
		& \qquad \leq C_{\mrm{in},T},
		\end{aligned}
	\end{equation}
	where $ C_{\mrm{in},T} \in (0,\infty) $ is a constant depending only on the initial data and $ T $, but independent of $ \sigma $. 
	
	{\par\noindent\bf Case 2.} In the case of $ \Omega = \mathbb T^2 \times (0,1) $, the same conclusions as in Case 1 hold, provided that: 
	\begin{itemize}
		\item \eqref{asm:initial} holds;
		\item In addition to \eqref{asm:general}, 
			\begin{equation}\label{asm:flat-1}
				\mu_\pm = \mu, \quad \lambda_\pm = \lambda;
			\end{equation}
		\item $ \vec u_{\sigma,\pm,0} $ satisfies \eqref{bc:flat};
		\item  $ \sigma \geq \sigma^* $, for some $ \sigma^* $ depending only on the initial data. 
	\end{itemize}
\end{thm}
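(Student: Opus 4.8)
The plan is to derive Theorem~\ref{thm:1} from a single family of a priori bounds that are uniform in $\sigma$, together with a continuation argument. For each \emph{fixed} $\sigma$, local existence of a solution in the class \eqref{fnt-space} is classical: the drag term $\sigma\rho_{\sigma,\pm}\rho_{\sigma,\mp}(\vec u_{\sigma,\mp}-\vec u_{\sigma,\pm})$ is then merely a bounded lower-order coupling between two compressible Navier--Stokes subsystems and can be handled by a linearize-and-iterate scheme, the density staying positive on a (possibly $\sigma$-dependent) interval (cf. \eqref{def:lowerbound-rho}). The whole issue is thus to show that the existence time $T$ and the constant $C_{\mrm{in},T}$ in \eqref{thm:est} can be chosen independent of $\sigma$. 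The organizing device is to split into the sum and the difference of the two velocities: writing $\vec w_\sigma:=\vec u_{\sigma,+}-\vec u_{\sigma,-}$ and dividing the momentum equations \subeqref{eq:2CMPNS}{2} by $\rho_{\sigma,\pm}$ before subtracting, $\vec w_\sigma$ solves a \emph{damped} equation of the schematic form
\[
  \dt\vec w_\sigma + (\text{transport, pressure and viscous terms}) = -\,\sigma\,(\rho_{\sigma,+}+\rho_{\sigma,-})\,\vec w_\sigma,
\]
whose damping coefficient is $\geq\sigma\underline{\rho}$ wherever the density lower bound holds; this single mechanism generates every $\sigma$- and $\sigma^2$-weighted quantity in \eqref{thm:est}. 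Conversely, \emph{adding} the two momentum equations, the drag cancels identically, so $\sum_{\pmindex}\rho_{\sigma,\mrm j}\vec u_{\sigma,\mrm j}$ obeys an equation with no large term --- up to contributions quadratic in $\vec w_\sigma$ it is the momentum equation of a single compressible fluid of density $\rho_{\sigma,+}+\rho_{\sigma,-}$ --- so the bulk of the estimates closes essentially as for a mono-fluid.

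For Case~1 I would assemble the a priori estimates in the following order. First, the energy identity \eqref{cnsvt:energy}, which bounds $E_\sigma$, $\int_0^T D_\sigma\,ds$ and, crucially, $\sigma\int_0^T\!\!\int\rho_{\sigma,+}\rho_{\sigma,-}|\vec w_\sigma|^2\idx\,ds$. Next, the time-derivative estimates: testing \subeqref{eq:2CMPNS}{2} with $\dt\vec u_{\sigma,\pm}$ (and likewise after applying $\dt$ or a spatial derivative, with the corresponding multipliers) generates the drag cross-term $\mp\sigma\int\rho_{\sigma,\mp}\vec w_\sigma\cdot\dt\vec u_{\sigma,\pm}\idx$; rewriting it via the $\vec w_\sigma$-equation turns this into an exact time derivative of a $\sigma$-weighted quadratic quantity in $\vec w_\sigma$ plus $\sigma$-weighted dissipation --- precisely the source of the $\sigma\norm{\vec w_\sigma(s)}{\Hnorm 1}^2$ and $\sigma^2\norm{\vec w_\sigma(s)}{\Lnorm 2}^2$ terms and their time integrals in \eqref{thm:est}. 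Elliptic (Lam\'e) regularity for $\dv\mathbb S_{\sigma,\pm}$ then converts the control of $\dt\vec u_{\sigma,\pm}$ in $L^2$ and in $L^2(0,T;H^1)$ into the bounds on $\norm{\vec u_{\sigma,\pm}}{\Hnorm 2}$ and $\int_0^T\norm{\vec u_{\sigma,\pm}}{\Hnorm 3}^2\,ds$, the forcing $\sigma\rho_{\sigma,+}\rho_{\sigma,-}\vec w_\sigma$ being a controlled right-hand side precisely because of the $\sigma$-weighted bounds on $\vec w_\sigma$ just obtained. Finally, $\norm{\rho_{\sigma,\pm}}{\Hnorm 2}$ and $\norm{\dt\rho_{\sigma,\pm}}{\Lnorm 2}$ follow from transport estimates on \subeqref{eq:2CMPNS}{1}, and the lower bound $\rho_{\sigma,\pm}\geq\underline{\rho}/2$ from integrating $\dt\log\rho_{\sigma,\pm}+\vec u_{\sigma,\pm}\cdot\nabla\log\rho_{\sigma,\pm}=-\dv\vec u_{\sigma,\pm}$ along the flow, on a time interval controlled only by $\norm{\dv\vec u_{\sigma,\pm}}{L^\infty(0,T;\supnorm)}$ --- hence by the a priori bound. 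Assembling, one gets a closed inequality $\frac{d}{dt}\mathcal E(t)+\mathcal D(t)\leq C_{\mrm{in}}+C\,\mathcal P\bigl(\mathcal E(t)\bigr)\bigl(1+\mathcal D(t)\bigr)$, with $\mathcal E(t)$ collecting the $\sup$-terms and $\mathcal D(t)$ the time-integrated terms in \eqref{thm:est} and $\mathcal P$ a polynomial whose coefficients are independent of $\sigma$; a continuation argument then produces a $\sigma$-independent $T$ and $C_{\mrm{in},T}$.

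For Case~2 the same scheme is used, with three slab-specific adjustments. The complete-slip conditions \eqref{bc:flat}, equivalently $\partial_z\vec v_{\sigma,\pm}=w_{\sigma,\pm}=0$ on $z=0,1$, are exactly what lets all the integrations by parts and the Lam\'e elliptic estimates close without boundary contributions, and they are propagated by \eqref{eq:2CMPNS}. The assumption $\mu_\pm=\mu$, $\lambda_\pm=\lambda$ of \eqref{asm:flat-1} makes both $\sum_{\pmindex}\rho_{\sigma,\mrm j}\vec u_{\sigma,\mrm j}$ and $\vec w_\sigma$ solve equations driven by one and the same viscous operator $\mu\Delta+(\mu+\lambda)\nabla\dv$, which keeps the sum/difference algebra clean. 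And $\sigma\geq\sigma^*$ is used to absorb the residual terms that remain $O(1)$ in $\sigma$ --- arising from trace/boundary contributions and from the coupling --- into the damping $\sigma\underline{\rho}$ of $\vec w_\sigma$.

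The main obstacle --- and the technical heart of Sections~\ref{sec:t3}--\ref{sec:flat} --- is the bookkeeping of the powers of $\sigma$. Each individual momentum equation genuinely carries the $O(\sigma)$ forcing $\sigma\rho_{\sigma,+}\rho_{\sigma,-}\vec w_\sigma$, and at every differentiation order one must show that it is dominated by the damping-induced smallness of $\vec w_\sigma$ --- measured in exactly the norms and with exactly the powers of $\sigma$ appearing in \eqref{thm:est} --- without ever producing an un-dissipated $\sigma$- or $\sigma^2$-term on the right-hand side of the energy inequality. Making this coupled hierarchy of $\sigma$-weighted differential inequalities close, with all constants independent of $\sigma$, is the crux of the argument.
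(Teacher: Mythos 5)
Your overall architecture --- uniform-in-$\sigma$ a priori estimates plus a continuation argument, the energy identity \eqref{cnsvt:energy} as the base layer, temporal/spatial derivative energy estimates in which summing the $+$ and $-$ contributions renders the drag terms sign-definite, transport estimates for the density and its lower bound, and the sum/difference Lam\'e decomposition with $\mu_\pm=\mu$, $\lambda_\pm=\lambda$ and $\sigma\ge\sigma^*$ in the slab --- coincides with the paper's. There is, however, one concrete gap in your Case~1 plan.

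You propose to obtain $\norm{\vec u_{\sigma,\pm}}{\Hnorm 2}$ and $\int_0^T\norm{\vec u_{\sigma,\pm}}{\Hnorm 3}^2\,ds$ by componentwise elliptic (Lam\'e) regularity, ``the forcing $\sigma\rho_{\sigma,+}\rho_{\sigma,-}\vec w_\sigma$ being a controlled right-hand side because of the $\sigma$-weighted bounds on $\vec w_\sigma$ just obtained.'' This is circular as stated: the pointwise-in-time bound $\sigma\norm{\vec w_\sigma}{\Lnorm 2}\le C$ needed to control that right-hand side comes from the algebraic relation \eqref{eq:+--}, whose right-hand side contains $\dv\mathbb S_{\sigma,\pm}/\rho_{\sigma,\pm}$ and hence already requires $\norm{\vec u_{\sigma,\pm}}{\Hnorm 2}$; and the temporal-derivative energy estimate does not produce a $\sigma^2$-weighted sup bound either --- testing the momentum equations with $\dt\vec u_{\sigma,\pm}$ and summing yields at best $\frac{d}{dt}\bigl(\sigma\int\rho_{\sigma,+}\rho_{\sigma,-}|\vec w_\sigma|^2\idx\bigr)$, i.e.\ one power of $\sigma$, not two. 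The paper breaks this circle by never placing the drag on the right-hand side in Case~1: it performs direct second-order \emph{parabolic} energy estimates on each component (testing \eqref{eq:2CMPNS-dd} with $2\partial^2\vec u_{\sigma,\pm}$), so that the drag appears only as the coercive term $\sigma\int\rho_{\sigma,+}\rho_{\sigma,-}|\partial^2(\vec u_{\sigma,+}-\vec u_{\sigma,-})|^2\idx$ plus commutators $I_{10},I_{11}$ involving derivatives of $\rho_{\sigma,+}\rho_{\sigma,-}$, which are absorbed via the interpolation \eqref{ineq:interpolation-1} and Young's inequality; the $H^2$ sup bound and the $L^2_tH^3$ bound are read off from $E_{\sigma,1}$ and $D_{\sigma,1}$, and only \emph{afterwards} are the $\sigma^2\norm{\vec w_\sigma}{\Lnorm 2}^2$ and $\sigma\norm{\vec w_\sigma}{\Hnorm 1}^2$ sup bounds deduced from \eqref{eq:+--}. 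If you insist on the elliptic route in Case~1, you must keep the damping on the left of the difference Lam\'e equation (as the paper does in \eqref{eq:flat+--}--\eqref{est:flat-h2-+--} for the slab), which then forces the restrictions $\mu_+=\mu_-$, $\lambda_+=\lambda_-$ and $\sigma\ge\sigma^*$ that the paper's Case~1 precisely avoids.
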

\begin{proof}[Proof of Theorem \ref{thm:1}] The local well-posedness of strong solutions to \eqref{eq:2CMPNS}, where the life span may depend on $ \sigma $, follows straightforward from standard fixed point arguments. Through a continuity argument, with the uniform estimates established in sections \ref{sec:t3} and \ref{sec:flat}, below, respectively, one can extend the life span of strong solutions to a finite time, which is strictly positive, and uniform in $ \sigma $. In particular, \eqref{thm:est} follows from \eqref{est:rho-total-1}, \eqref{est:u-total-1}, \eqref{est:u-totla-1-02}, and \eqref{est:u-total-2}. This finishes the proof. 
\end{proof}
The goal of this paper is to study the asymptote of system \eqref{eq:2CMPNS} as 
\begin{equation}\label{asymptote}
	\sigma \rightarrow \infty,
\end{equation}
which is stated in the following:
\begin{thm}\label{thm-2}
	Under the conditions in Theorem \ref{thm:1}, in either case, there exists $ (\rho_\pm,\vec u) $, as the limit of $ (\rho_{\sigma,\pm},\vec u_{\sigma,\pm} ) $ as $ \sigma \rightarrow \infty $ in the sense of \eqref{limit:general-1}, below, such that it solves the two-phase fluid model \eqref{eq:limiting-eq} in $ \mathbb T^3 $ or $ \mathbb T^2 \times(0,1) $ with boundary conditions \eqref{bc:flat}, respectively. 
	
	In addition, \eqref{eq:conv-3} holds with $ \mathfrak S $ given by
	\begin{equation}\label{def:interaction}
	\begin{aligned}
	& \mathfrak S :=	\dfrac{1}{\rho_+ + \rho_-} \biggl\lbrack \dfrac{R_+ \gamma_+ }{\gamma_+ - 1} \nabla \rho_+^{\gamma_+ - 1} - \dfrac{R_- \gamma_- }{\gamma_- - 1} \nabla \rho_-^{\gamma_- - 1}  \\
	& \qquad + \biggl( \dfrac{\mu_+}{\rho_+} - \dfrac{\mu_-}{\rho_-}\biggr)\dv (\nabla \vec u + \nabla^\top \vec u) + \biggl( \dfrac{\lambda_+}{\rho_+} - \dfrac{\lambda_-}{\rho_-}\biggr) \nabla \dv \vec u \biggr\rbrack.
	\end{aligned}
	\end{equation}
\end{thm}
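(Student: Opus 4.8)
The plan is to pass to the limit $\sigma\to\infty$ in \eqref{eq:2CMPNS} using the uniform-in-$\sigma$ bounds of Theorem \ref{thm:1}, standard Aubin--Lions--Simon compactness, and the precise $\sigma$-scaling recorded in \eqref{thm:est}, which is precisely what makes the interaction term $\mathfrak S$ visible in the limit. First, from \eqref{fnt-space}--\eqref{thm:est} the families $\rho_{\sigma,\pm}$, $\dt\rho_{\sigma,\pm}$, $\vec u_{\sigma,\pm}$, $\dt\vec u_{\sigma,\pm}$ are bounded, uniformly in $\sigma$, in $L^\infty(0,T;\Hnorm 2)$, $L^\infty(0,T;\Lnorm 2)$, $L^\infty(0,T;\Hnorm 2)\cap L^2(0,T;\Hnorm 3)$, and $L^\infty(0,T;\Lnorm 2)\cap L^2(0,T;\Hnorm 1)$, respectively; extracting a subsequence I would obtain $\rho_{\sigma,\pm}\to\rho_\pm$ in $C([0,T];\Hnorm s)$ and $\vec u_{\sigma,\pm}\to\vec u_\pm$ in $C([0,T];\Hnorm s)\cap L^2(0,T;\Hnorm{s+1})$ for every $s<2$ — hence uniformly on $[0,T]\times\overline\Omega$, since $\Hnorm{2-\delta}\hookrightarrow C(\overline\Omega)$ for small $\delta>0$ — together with the accompanying weak-$\ast$ and weak limits in the spaces of \eqref{fnt-space}. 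In particular the strictly positive lower bound $\rho_\pm\geq\underline\rho/2>0$ persists in the limit, so all the quotients below are well defined.

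Next, the $\sigma$-weighted terms $\sigma^2\norm{(\vec u_{\sigma,+}-\vec u_{\sigma,-})(s)}{\Lnorm 2}^2$ and $\sigma\int_0^T\norm{(\vec u_{\sigma,+}-\vec u_{\sigma,-})(s)}{\Hnorm 2}^2\,ds$ in \eqref{thm:est} force $\vec u_{\sigma,+}-\vec u_{\sigma,-}\to0$ in $L^\infty(0,T;\Lnorm 2)\cap L^2(0,T;\Hnorm 2)$, at rate $O(\sigma^{-1/2})$; hence the two limiting velocities coincide, $\vec u_+=\vec u_-=:\vec u$, and $\vec u_{\sigma,\pm}\to\vec u$ in the topologies above. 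I would then pass to the limit in the continuity equations of \eqref{eq:2CMPNS} (linear, so strong convergence suffices) to get \subeqref{eq:limiting-eq}{1}, and add the two momentum equations: the drag contributions cancel by the built-in symmetry of \eqref{eq:2CMPNS}, the convective and pressure terms pass by the strong convergences above (using $\Hnorm 1\hookrightarrow\Lnorm 6$ in three dimensions to handle the quadratic terms), and the viscous terms pass by weak convergence in $L^2(0,T;\Hnorm 3)$. This gives \subeqref{eq:limiting-eq}{2} with \eqref{def:pressure-viscosity-limiting}; in the flat case the boundary conditions \eqref{bc:flat} are inherited by $\vec u$ through continuity of the trace under the $C([0,T];\Hnorm s)$ convergence, $s\in(3/2,2)$.

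The core of the argument is \eqref{eq:conv-3}. I would set $\vec w_\sigma:=\sigma(\vec u_{\sigma,+}-\vec u_{\sigma,-})$, which by the $\sigma^2$-weighted bounds in \eqref{thm:est} is bounded in $L^\infty(0,T;\Lnorm 2)\cap L^2(0,T;\Hnorm 1)$, so $\vec w_\sigma\rightharpoonup\vec w$ weakly in $L^2(0,T;\Hnorm 1)$ along a further subsequence. Dividing the $\pm$-momentum equation of \eqref{eq:2CMPNS} by $\rho_{\sigma,\pm}$, rewriting $\dt(\rho_{\sigma,\pm}\vec u_{\sigma,\pm})+\dv(\rho_{\sigma,\pm}\vec u_{\sigma,\pm}\otimes\vec u_{\sigma,\pm})=\rho_{\sigma,\pm}(\dt\vec u_{\sigma,\pm}+\vec u_{\sigma,\pm}\cdot\nabla\vec u_{\sigma,\pm})$ via the continuity equation, and subtracting the $+$ and $-$ equations, the drag terms combine into $-\sigma(\rho_{\sigma,+}+\rho_{\sigma,-})(\vec u_{\sigma,+}-\vec u_{\sigma,-})=-(\rho_{\sigma,+}+\rho_{\sigma,-})\vec w_\sigma$, leaving an identity in which $(\rho_{\sigma,+}+\rho_{\sigma,-})\vec w_\sigma$ equals $-\dt(\vec u_{\sigma,+}-\vec u_{\sigma,-})$, minus the convective difference $\vec u_{\sigma,+}\cdot\nabla\vec u_{\sigma,+}-\vec u_{\sigma,-}\cdot\nabla\vec u_{\sigma,-}$, plus $\rho_{\sigma,+}^{-1}(\dv\mathbb S_{\sigma,+}-\nabla p_{\sigma,+})-\rho_{\sigma,-}^{-1}(\dv\mathbb S_{\sigma,-}-\nabla p_{\sigma,-})$. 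By $\sigma\int_0^T\norm{\dt(\vec u_{\sigma,+}-\vec u_{\sigma,-})(s)}{\Lnorm 2}^2\,ds\leq C$ the first term tends to $0$ in $L^2(0,T;\Lnorm 2)$, and the convective difference, rewritten through $\vec u_{\sigma,+}-\vec u_{\sigma,-}$, tends to $0$ by the $L^2(0,T;\Hnorm 2)$ decay of the velocity difference; passing to the limit in the remaining (pressure and viscous) terms, with the aid of $\rho^{-1}\nabla(R\rho^\gamma)=\frac{R\gamma}{\gamma-1}\nabla\rho^{\gamma-1}$, identifies $(\rho_++\rho_-)\vec w$ explicitly in terms of $\rho_\pm$ and $\vec u$, and an elementary rearrangement shows $\vec w=-\mathfrak S$ with $\mathfrak S$ as in \eqref{def:interaction}. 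Finally, passing to the limit directly in the $\pm$-momentum equation of \eqref{eq:2CMPNS}, whose drag term is $\mp\rho_{\sigma,+}\rho_{\sigma,-}\vec w_\sigma$ and converges to $\mp\rho_+\rho_-\vec w=\pm\rho_+\rho_-\mathfrak S$ (a product of the uniformly convergent densities and the weakly convergent $\vec w_\sigma$), produces \eqref{eq:conv-3}.

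I expect the main obstacle to be exactly this passage to the limit in the drag: it is a product of the diverging factor $\sigma$ and the vanishing factor $\vec u_{\sigma,\mp}-\vec u_{\sigma,\pm}$, so one must invoke the precise $\sigma$-weights of \eqref{thm:est} to know that $\vec w_\sigma$ stays bounded in a space admitting a weak limit, and then couple this with the strong (hence uniform) compactness of the densities to take the limit in $\rho_{\sigma,+}\rho_{\sigma,-}\vec w_\sigma$. A related subtlety is to verify that the time-derivative difference and the convective difference in the identity above genuinely vanish in the limit, which rests on the quantitative decay $\vec u_{\sigma,+}-\vec u_{\sigma,-}=O(\sigma^{-1/2})$ in $L^2(0,T;\Hnorm 2)$ rather than on mere weak convergence. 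If convergence of the entire family, and not merely of a subsequence, is wanted, a weak--strong uniqueness property for \eqref{eq:limiting-eq} within the regularity class found in the first step is additionally needed.
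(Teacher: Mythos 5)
Your proposal follows essentially the same route as the paper: the uniform bounds of Theorem \ref{thm:1} give the convergences of \eqref{limit:general-1}, summing the momentum equations yields \eqref{eq:limiting-eq}, and dividing each momentum equation by its density and subtracting (the paper's \eqref{eq:+--}--\eqref{eq:conv-2}) identifies the limit of $\sigma(\vec u_{\sigma,-}-\vec u_{\sigma,+})$ as $\mathfrak S$, after which the limit in each $\pm$-equation gives \eqref{eq:conv-3}. Your explicit introduction of $\vec w_\sigma$ with its weak $L^2(0,T;\Hnorm{1})$ limit is only a notational refinement of the paper's passage "in the sense of distribution," so the two arguments coincide in substance.
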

\begin{proof}[Proof of Theorem \ref{thm-2}] The arguments in section \ref{sec:limit} directly yield the theorem. 
\end{proof}

\begin{remark}
	The assumption \eqref{asm:initial} implies that
	\begin{gather*}
		\norm{\nabla \log \rho_{\sigma,\pm,0}}{\Lnorm{3}},~ \norm{\rho_{\sigma,\pm,0}^{1/2}}{\Hnorm{2}}, ~ \norm{\dt \rho_{\sigma,\pm}\big|_{t = 0}}{\Lnorm{2}}, \\
			\norm{\vec u_{\sigma,\pm,0}}{\Hnorm{2}}, ~ \norm{\dt \vec u_{\sigma,\pm}\big|_{t=0}}{\Lnorm{2}},
	\end{gather*}
	are bounded, which will be used in sections \ref{subsec:density-t3}, \ref{subsec:uniform-1}, and \ref{subsec:uniform-2}.
\end{remark}

\begin{remark}
	The method in this paper can be generated to more general domain $ \Omega \in \mathbb R^3 $, for instance, any bounded domain with smooth boundary, by applying standard localizing-in-space arguments. Also, with minor modifications, instead of \eqref{bc:flat}, one can also show similar results with no-slip boundary conditions, i.e., $ \vec u_{\sigma,\pm}\big|_{\partial \Omega} = 0$.
\end{remark}

\section{Uniform estimates: $ \Omega = \mathbb T^3 $}\label{sec:t3}

In this section, we study \eqref{eq:2CMPNS} in periodic domain $ \mathbb T^3 $. The goal is to obtain uniform in $ \sigma $ estimates in order to pass the limit \eqref{asymptote}.

\subsection{Temporal derivative estimates}

After applying $ \dt $ to system \eqref{eq:2CMPNS}, we end up with:
\begin{equation}\label{eq:2CMPNS-dt}
	\begin{cases}
		\partial_{tt}\rho_{\sigma, \pm} + \dv(\dt \rho_{\sigma,\pm} \vec u_{\sigma,\pm} ) + \dv ( \rho_{\sigma,\pm} \dt \vec u_{\sigma,\pm} ) = 0,\\
		\rho_{\sigma, \pm} \partial_{tt} \vec u_{\sigma, \pm} + \rho_{\sigma,\pm} \vec u_{\sigma, \pm} \cdot \nabla \dt \vec u_{\sigma, \pm} + \dt \rho_{\sigma, \pm} \partial_{t} \vec u_{\sigma, \pm} \\
		\qquad + \dt (\rho_{\sigma,\pm} \vec u_{\sigma, \pm}) \cdot \nabla \vec u_{\sigma, \pm} + \nabla \dt p_{\sigma, \pm} = \dv \dt \mathbb S_{\sigma, \pm} \\
		\qquad + \sigma \dt(\rho_{\sigma, \pm}\rho_{\sigma, \mp}) (\vec u_{\sigma,\mp} - \vec u_{\sigma, \pm}) + \sigma \rho_{\sigma, \pm}\rho_{\sigma, \mp} \dt (\vec u_{\sigma,\mp} - \vec u_{\sigma, \pm}).
	\end{cases} 
\end{equation}

Taking the $ L^2 $-inner product of \subeqref{eq:2CMPNS-dt}{2} with $ 2 \dt \vec u_{\sigma, \pm} $, after applying integration by parts and summing up the $ + $--estimate with the $ - $--estimate, leads to
\begin{equation}\label{est:001}
	\begin{aligned}
		& \dfrac{d}{dt} \sum_{\pmindex}\norm{\rho_{\sigma,\mrm j}^{1/2}\dt \vec u_{\sigma,\mrm j}}{\Lnorm{2}}^2 \\
		& \qquad + \sum_\pmindex \biggl\lbrace \mu_\mrm j \norm{\nabla\dt \vec u_{\sigma,\mrm j} + \nabla^\top \dt \vec u_{\sigma,\mrm j} }{\Lnorm{2}}^2 + 2 \lambda_\mrm j \norm{\dv \dt \vec u_{\sigma,\mrm j}}{\Lnorm{2}}^2 \biggr\rbrace \\
		& \qquad + 2 \sigma \int \rho_{\sigma,+}\rho_{\sigma,-} |\dt (\vec u_{\sigma, +} - \vec u_{\sigma, -})|^2 \idx \\
		& \quad = \sum_{j = 1}^{4}I_{j},
	\end{aligned}
\end{equation}
where
\begin{align*}
	I_{1} := & - 2 \sum_{\pmindex}\int \dt \rho_{\sigma, \mrm j} |\dt \vec u_{\sigma,\mrm j}|^2 \idx  , \\
	I_{2} := & - 2 \sum_{\pmindex} \int \bigl(\dt(\rho_{\sigma,\mrm j}\vec u_{\sigma,\mrm j}) \cdot \nabla \bigr) \vec u_{\sigma, \mrm j}  \cdot \dt \vec u_{\sigma,\mrm j} \idx , \\
	I_{3} := & 2 \sum_{\pmindex} R_\mrm j \gamma_\mrm j \int \rho_{\sigma,\mrm j}^{\gamma_\mrm{j} - 1} \dt \rho_{\sigma, \mrm j} \dv \dt \vec u_{\sigma,\mrm j} \idx, \\
	I_{4} := & 2 \sigma \int \dt(\rho_{\sigma,+}\rho_{\sigma,-}) ( \vec u_{\sigma,-} - \vec u_{\sigma,+}) \cdot \dt (\vec u_{\sigma,+} - \vec u_{\sigma, -}) \idx.
\end{align*}
Applying H\"older's inequality, one can obtain that
\begin{align*}
	& I_4 \leq \sigma \int \rho_{\sigma,+} \rho_{\sigma, -} |\dt(\vec u_{\sigma,+} - \vec u_{\sigma,-})|^2 \idx \\
	& \qquad + 4\sigma \norm{\dt ( \rho_{\sigma,+}\rho_{\sigma, -})^{1/2} }{\Lnorm{2}}^2 \norm{\vec u_{\sigma, +} - \vec u_{\sigma,-} }{\Lnorm{\infty}}^2.
\end{align*}
The estimates of $ I_1, \cdots, I_3 $ are standard, which we omit the details. Thus \eqref{est:001} yields, 
\begin{equation}\label{est:002}
	 \begin{aligned}	
	 	& \dfrac{d}{dt} \sum_{\pmindex}\norm{\rho_{\sigma,\mrm j}^{1/2}\dt \vec u_{\sigma,\mrm j}}{\Lnorm{2}}^2 \\
	 	& \qquad + \sum_\pmindex \biggl\lbrace \dfrac{\mu_\mrm j}{2} \norm{\nabla\dt \vec u_{\sigma,\mrm j} + \nabla^\top \dt \vec u_{\sigma,\mrm j} }{\Lnorm{2}}^2 +  \lambda_\mrm j \norm{\dv \dt \vec u_{\sigma,\mrm j}}{\Lnorm{2}}^2 \biggr\rbrace \\
	 	& \qquad +  \sigma \int \rho_{\sigma,+}\rho_{\sigma,-} |\dt (\vec u_{\sigma, +} - \vec u_{\sigma, -})|^2 \idx\\
	 	& \quad \leq 4 \sigma \norm{\dt ( \rho_{\sigma,+}\rho_{\sigma, -})^{1/2} }{\Lnorm{2}}^2 \norm{\vec u_{\sigma, +} - \vec u_{\sigma,-} }{\Lnorm{\infty}}^2 \\
	 	& \qquad + \mathcal H( \norm{\rho_{\sigma,\pm}}{\Lnorm{\infty}}, \norm{\nabla \rho_{\sigma,\pm}^{1/2}}{\Lnorm{6}},\norm{\vec u_{\sigma,\pm}}{\Lnorm{\infty}}, \norm{\nabla \vec u_{\sigma, \pm}}{\Lnorm{6}} , \\
	 	& \qquad \qquad \qquad \norm{\rho_{\sigma,\pm}^{1/2} \dt \vec u_{\sigma,\pm}}{\Lnorm{2}}),
	 \end{aligned}
\end{equation}
where we have used \subeqref{eq:2CMPNS}{1} to substitute $ \dt \rho_{\sigma,\pm} $, and we have applied the fact that, for any vector field $ \vec v: \Omega = \mathbb T^3 \mapsto \mathbb R^3 $, one has
\begin{equation}\label{id:korn-t3}
	2\norm{\nabla \vec v}{\Lnorm{2}}^2 + 2 \norm{\dv \vec v }{\Lnorm{2}}^2 = \norm{\nabla \vec v + \nabla^\top  \vec v}{\Lnorm{2}}^2.
\end{equation}

\subsection{Spatial derivative estimates}

After applying $ \partial^2 = \partial \partial $ to system \eqref{eq:2CMPNS}, with $ \partial \in \lbrace \partial_{x}, \partial_y,\partial_z \rbrace $, one can write down:
\begin{equation}\label{eq:2CMPNS-dd}
	\begin{cases}
	\dt \partial^2 \rho_{\sigma, \pm} + \dv (\partial^2 \rho_{\sigma,\pm} \vec u_{\sigma,\pm}) + 2 \dv (\partial \rho_{\sigma, \pm} \partial \vec u_{\sigma,\pm} ) \\
	\qquad + \dv ( \rho_{\sigma, \pm} \partial^2 \vec u_{\sigma,\pm} ) = 0, \\
	\rho_{\sigma,\pm} \dt \partial^2 \vec u_{\sigma,\pm} + \rho_{\sigma, \pm} \vec u_{\sigma,\pm} \cdot \nabla \partial^2 \vec u_{\sigma, \pm} + 2 \partial \rho_{\sigma, \pm} \dt \partial \vec u_{\sigma,\pm} \\
	\qquad + 2 \partial(\rho_{\sigma,\pm} \vec u_{\sigma,\pm}) \cdot \nabla \partial \vec u_{\sigma,\pm} + \partial^2 \rho_{\sigma, \pm} \dt \vec u_{\sigma,\pm} + \partial^2 (\rho_{\sigma, \pm}\vec u_{\sigma,\pm}) \cdot \nabla \vec u_{\sigma,\pm} \\
	\qquad + \nabla \partial^2 p_{\sigma,\pm} = \dv \partial^2 \mathbb S_{\sigma,\pm} + \sigma \rho_{\sigma,\pm}\rho_{\sigma,\mp} \partial^2 (\vec u_{\sigma,\mp} - \vec u_{\sigma,\pm}) \\
	\qquad + 2 \sigma \partial (\rho_{\sigma,\pm}\rho_{\sigma,\mp}) \partial (\vec u_{\sigma,\mp} - \vec u_{\sigma,\pm}) + \sigma \partial^2(\rho_{\sigma,\pm}\rho_{\sigma,\mp}) (\vec u_{\sigma,\mp} - \vec u_{\sigma,\pm}).
	\end{cases}
\end{equation}

\begin{remark}
	For any two functions $ f $ and $ g $, we have used
	$$ \partial^2 (fg) = \partial^2 f \times g + 2 \partial f \partial g + f \partial^2 g $$
	to represent
	$ \partial_{\mrm i}\partial_{\mrm j} (fg) = \partial_{\mrm i}\partial_{\mrm j} f \times g + \partial_{\mrm i} f \partial_{\mrm j} g + \partial_{\mrm j} f \partial_{\mrm i} g + f \times \partial_{\mrm i}\partial_{\mrm j} g $, $ \mrm i, \mrm j \in \lbrace x,y,z \rbrace $ for the sake of simplifying the notations.
\end{remark}

Taking the $ L^2 $-inner product of \subeqref{eq:2CMPNS-dd}{2} with $ 2 \partial^2 \vec u_{\sigma,\pm} $, after applying integration by parts and summing up the $ + $--estimate and the $-$--estimate, leads to
\begin{equation}\label{est:101}
	\begin{aligned}
	& \dfrac{d}{dt} \sum_{\pmindex} \norm{\rho_{\sigma,\mrm j}^{1/2} \partial^2 \vec u_{\sigma, \mrm j} }{\Lnorm{2}}^2 \\
	& \qquad + \sum_{\pmindex} \biggl\lbrace \mu_\mrm j \norm{\nabla \partial^2 \vec u_{\sigma,\mrm j} + \nabla^\top \partial^2 \vec u_{\sigma,\mrm j}}{\Lnorm{2}}^2 + 2 \lambda_\mrm j \norm{\dv \partial^2 \vec u_{\sigma,\mrm j}}{\Lnorm{2}}^2 \biggr\rbrace \\
	& \qquad + 2 \sigma \int \rho_{\sigma,+} \rho_{\sigma,-} |\partial^2 (\vec u_{\sigma,+}-\vec u_{\sigma,-})|^2 \idx\\
	& \quad = \sum_{j = 5}^{11} I_j,
	\end{aligned}
\end{equation}
where
\begin{align*}
	I_5 := & - 4 \sum_{\pmindex} \int \partial \rho_{\sigma,\mrm j} \dt \partial \vec u_{\sigma, \mrm j} \cdot \partial^2 \vec u_{\sigma,\mrm j} \idx , \\
	I_6 := & -4 \sum_\pmindex \int \bigl( \partial (\rho_{\sigma,\mrm j} \vec u_{\sigma,\mrm j}) \cdot \nabla \bigr) \partial \vec u_{\sigma,\mrm j} \cdot \partial^2 \vec u_{\sigma,\mrm j} \idx , \\
	I_7 := & - 2 \sum_{\pmindex} \int \partial^2 \rho_{\sigma,\mrm j} \dt \vec u_{\sigma, \mrm j} \cdot \partial^2 \vec u_{\sigma,\mrm j} \idx , \\
	I_8 := & - 2 \sum_\pmindex \int \bigl( \partial^2 (\rho_{\sigma,\mrm j} \vec u_{\sigma,\mrm j}) \cdot \nabla \bigr) \vec u_{\sigma,\mrm j} \cdot \partial^2 \vec u_{\sigma,\mrm j} \idx , \\
	I_9 := & 2 \sum_\pmindex R_\mrm j \int \partial^2 \rho_{\sigma,\mrm j}^{\gamma_\mrm j} \dv \partial^2 \vec u_{\sigma,\mrm j} \idx  , \\
	I_{10} := & 4 \sigma \int \partial(\rho_{\sigma,+}\rho_{\sigma, -}) \partial (\vec u_{\sigma, -} - \vec u_{\sigma,+}) \cdot \partial^2 (\vec u_{\sigma, +} - \vec u_{\sigma,-}) \idx , \\
	I_{11} := & 2 \sigma \int \partial^2(\rho_{\sigma,+}\rho_{\sigma, -}) (\vec u_{\sigma, -} - \vec u_{\sigma,+}) \cdot \partial^2 (\vec u_{\sigma, +} - \vec u_{\sigma,-}) \idx.
\end{align*}
Applying H\"older's inequality leads to 
\begin{align*}
	& I_{10} + I_{11} \leq \sigma \int \rho_{\sigma,+}\rho_{\sigma,-} |\partial^2 (\vec u_{\sigma,+} - \vec u_{\sigma,-})|^2 \idx \\
	& \qquad + \sigma \bigl( 2 \norm{\partial\log (\rho_{\sigma, +}\rho_{\sigma, -})}{\Lnorm{3}}^2 \norm{\partial(\rho_{\sigma, +}\rho_{\sigma, -})^{1/2}}{\Lnorm{6}}^2 \\
	& \qquad\qquad + 8 \norm{\partial^2(\rho_{\sigma, +}\rho_{\sigma, -})^{1/2}}{\Lnorm{2}}^2  \bigr) \times
	\norm{\vec u_{\sigma,+}-\vec u_{\sigma,-}}{\Lnorm{\infty}}^2\\
	& \qquad + 32 \sigma \norm{\partial(\rho_{\sigma, +}\rho_{\sigma, -})^{1/2}}{\Lnorm{6}}^2 \norm{\partial(\vec u_{\sigma,+} - \vec u_{\sigma,-})}{\Lnorm{3}}^2.
\end{align*}
The estimates of $ I_{5}, \cdots, I_{9} $ are standard, which we omit the details again. Thus \eqref{est:101} yields, for any $ \delta \in (0,1) $,
\begin{equation}\label{est:102}
	\begin{aligned}
	& \dfrac{d}{dt} \sum_{\pmindex} \norm{\rho_{\sigma,\mrm j}^{1/2} \partial^2 \vec u_{\sigma, \mrm j} }{\Lnorm{2}}^2 \\
	& \qquad + \sum_{\pmindex} \biggl\lbrace \dfrac{\mu_\mrm j}{2} \norm{\nabla \partial^2 \vec u_{\sigma,\mrm j} + \nabla^\top \partial^2 \vec u_{\sigma,\mrm j}}{\Lnorm{2}}^2 +  \lambda_\mrm j \norm{\dv \partial^2 \vec u_{\sigma,\mrm j}}{\Lnorm{2}}^2 \biggr\rbrace \\
	& \qquad +  \sigma \int \rho_{\sigma,+} \rho_{\sigma,-} |\partial^2 (\vec u_{\sigma,+}-\vec u_{\sigma,-})|^2 \idx\\
	& \quad \leq \sigma \bigl( 2 \norm{\partial\log (\rho_{\sigma, +}\rho_{\sigma, -})}{\Lnorm{3}}^2 \norm{\partial(\rho_{\sigma, +}\rho_{\sigma, -})^{1/2}}{\Lnorm{6}}^2 \\
	& \qquad\qquad + 8 \norm{\partial^2(\rho_{\sigma, +}\rho_{\sigma, -})^{1/2}}{\Lnorm{2}}^2  \bigr) \times
	\norm{\vec u_{\sigma,+}-\vec u_{\sigma,-}}{\Lnorm{\infty}}^2\\
	& \qquad + 32 \sigma \norm{\partial(\rho_{\sigma, +}\rho_{\sigma, -})^{1/2}}{\Lnorm{6}}^2 \norm{\partial(\vec u_{\sigma,+} - \vec u_{\sigma,-})}{\Lnorm{3}}^2 \\
	& \qquad + \delta \sum_{\pmindex} \norm{\nabla \dt \vec u_{\sigma,\mrm j}}{\Lnorm{2}}^2 + \mathcal H(C_\delta, \norm{\rho_{\sigma,\pm}}{\Lnorm{\infty}}, \norm{\nabla \rho_{\sigma,\pm}^{1/2}}{\Lnorm{6}}, \\
	& \qquad \qquad \qquad \norm{\nabla\log \rho_{\sigma,\pm}}{\Lnorm{3}}, \norm{\nabla^2\rho_{\sigma,\pm}^{1/2}}{\Lnorm{2}},
	\norm{\vec u_{\sigma,\pm}}{\Lnorm{\infty}},\\
	& \qquad \qquad\qquad  \norm{\nabla \vec u_{\sigma,\pm}}{\Lnorm{6}} , \norm{\rho_{\sigma,\pm}^{1/2}\nabla^2 \vec u_{\sigma,\pm}}{\Lnorm{2}}, \norm{\rho_{\sigma,\pm}^{1/2}\dt \vec u_{\sigma,\pm}}{\Lnorm{2}}
	).
	\end{aligned}
\end{equation}

Similar estimates also hold for $ \norm{\rho_{\sigma,\pm}^{1/2} \partial \vec u_{\sigma,\pm}}{\Lnorm{2}} $. We omit the detail and only record the result here:
\begin{equation}\label{est:103}
	\begin{aligned}
	& \dfrac{d}{dt}\sum_{\pmindex} \norm{\rho_{\sigma,\mrm j}^{1/2} \partial \vec u_{\sigma, \mrm j} }{\Lnorm{2}}^2 \\
	& \qquad + \sum_{\pmindex} \biggl\lbrace \dfrac{\mu_\mrm j}{2} \norm{\nabla \partial \vec u_{\sigma,\mrm j} + \nabla^\top \partial \vec u_{\sigma,\mrm j}}{\Lnorm{2}}^2 +  \lambda_\mrm j \norm{\dv \partial \vec u_{\sigma,\mrm j}}{\Lnorm{2}}^2 \biggr\rbrace \\
	& \qquad +  \sigma \int \rho_{\sigma,+} \rho_{\sigma,-} |\partial (\vec u_{\sigma,+}-\vec u_{\sigma,-})|^2 \idx\\
	& \quad \leq 4 \sigma \norm{\partial ( \rho_{\sigma,+}\rho_{\sigma, -})^{1/2} }{\Lnorm{2}}^2 \norm{\vec u_{\sigma, +} - \vec u_{\sigma,-} }{\Lnorm{\infty}}^2\\
	& \qquad + \mathcal H( \norm{\rho_{\sigma,\pm}}{\Lnorm{\infty}}, \norm{\nabla \rho_{\sigma,\pm}^{1/2}}{\Lnorm{6}},\norm{\vec u_{\sigma,\pm}}{\Lnorm{\infty}}, \norm{\nabla \vec u_{\sigma, \pm}}{\Lnorm{6}} , \\
	& \qquad \qquad \qquad \norm{\rho_{\sigma,\pm}^{1/2} \dt \vec u_{\sigma,\pm}}{\Lnorm{2}}).
	\end{aligned}
\end{equation}

\subsection{Estimates on the densities}\label{subsec:density-t3}

First, we derive the estimate of $ \norm{ \rho_{\sigma,\pm}^{1/2}}{\Hnorm{2}} $. Recall from \subeqref{eq:2CMPNS}{1}, one has
\begin{equation}\label{eq:rho-to-half}
	2 \dt \rho_{\sigma,\pm}^{1/2} + 2 \vec u_{\sigma,\pm} \cdot \nabla \rho_{\sigma, \pm}^{1/2} + \rho_{\sigma, \pm}^{1/2}\dv \vec u_{\sigma,\pm} = 0.
\end{equation}
Then performing standard $ H^s $-estimates yields
\begin{equation}\label{est:rho-h2}
	\norm{\rho_{\sigma,\pm}^{1/2}(t)}{\Hnorm{2}}\leq \norm{\rho_{\sigma,\pm,0}^{1/2}}{\Hnorm{2}} e^{C\int_0^t \norm{\nabla \vec u_{\sigma,\pm}(s)}{\Hnorm{2}} \,ds},
\end{equation}
and therefore
\begin{equation}\label{est:rho-embedding}
	\norm{\rho_{\sigma,\pm}}{\Lnorm{\infty}}, \norm{\nabla \rho_{\sigma,\pm}^{1/2}}{\Lnorm{6}}^2 \leq C \norm{\rho_{\sigma,\pm,0}^{1/2}}{\Hnorm{2}}^2 e^{C\int_0^t \norm{\nabla \vec u_{\sigma,\pm}(s)}{\Hnorm{2}} \,ds}.
\end{equation}


Next, we shall derive the estimate of $ \norm{\nabla\log\rho_{\sigma,\pm}}{\Lnorm{3}} $. 
$ \log \rho_{\sigma,\pm} $ satisfies
\begin{equation}\label{eq:rho-w-log}
	\dt \log \rho_{\sigma,\pm} + \vec u_{\sigma,\pm} \cdot \nabla \log \rho_{\sigma, \pm} + \dv \vec u_{\sigma,\pm} =0 .
\end{equation}
Therefore, one can derive
\begin{gather*}
	\dfrac{d}{dt} \norm{\nabla \log \rho_{\sigma,\pm}}{\Lnorm{3}} \leq 2 \norm{\nabla \vec u_{\sigma,\pm}}{\Lnorm{\infty}} \norm{\nabla \log \rho_{\sigma,\pm}}{\Lnorm{3}} + \norm{\nabla^2 \vec u_{\sigma,\pm}}{\Lnorm{3}}\\
	\leq C \norm{\nabla \vec u_{\sigma,\pm}}{\Hnorm{2}} \norm{\nabla \log \rho_{\sigma,\pm}}{\Lnorm{3}} + C \norm{\nabla \vec u_{\sigma,\pm}}{\Hnorm{2}} ,
\end{gather*}
which yields
\begin{equation}\label{est:rho-w-log-w13}
	\norm{\nabla \log \rho_{\sigma,\pm}(t)}{\Lnorm{3}} \leq (C + \norm{\nabla \log \rho_{\sigma,\pm,0}}{\Lnorm{3}}) e^{C\int_0^t \norm{\nabla \vec u_{\sigma,\pm}(s)}{\Hnorm{2}}} \,ds - C.
\end{equation}


In addition, applying $ \dt $ to \eqref{eq:rho-to-half} leads to
\begin{equation}\label{eq:rho-to-half-dt}
	\begin{gathered}
	2 \dt^2 \rho_{\sigma,\pm}^{1/2} + 2 \vec u_{\sigma,\pm} \cdot \nabla \dt \rho_{\sigma, \pm}^{1/2} + \dt \rho_{\sigma, \pm}^{1/2} \dv \vec u_{\sigma,\pm} \\
	+ 2 \dt \vec u_{\sigma,\pm} \cdot \nabla \rho_{\sigma, \pm}^{1/2} + \rho_{\sigma, \pm}^{1/2} \dv \dt \vec u_{\sigma,\pm} = 0.
	\end{gathered}
\end{equation}
Then after taking the $ L^2 $-inner product of \eqref{eq:rho-to-half-dt} with $ \dt \rho_{\sigma,\pm} $ and applying integration by parts in the resultant, one has
\begin{align*}
	& \dfrac{d}{dt} \norm{\dt \rho_{\sigma,\pm}^{1/2}}{\Lnorm{2}}^2 = - 2 \int( \dt \vec u_{\sigma,\pm} \cdot \nabla) \rho_{\sigma, \pm}^{1/2} \cdot \dt \rho_{\sigma, \pm}^{1/2} \idx \\
	& \qquad \qquad - \int \rho_{\sigma, \pm}^{1/2} \dv \dt \vec u_{\sigma,\pm} \cdot \dt \rho_{\sigma, \pm}^{1/2} \idx \\
	& \qquad \leq 2 \norm{\dt \vec u_{\sigma,\pm}}{\Lnorm{3}} \norm{\nabla \rho_{\sigma,\pm}^{1/2}}{\Lnorm{6}} \norm{\dt \rho_{\sigma,\pm}^{1/2}}{\Lnorm{2}} \\
	& \qquad \qquad + \norm{\rho_{\sigma,\pm}}{\Lnorm{\infty}}^{1/2} \norm{\dv \dt \vec u_{\sigma,\pm}}{\Lnorm{2}}\norm{\dt \rho_{\sigma,\pm}^{1/2}}{\Lnorm{2}},
\end{align*}
which implies
\begin{equation}\label{est:rho-dt}
	\begin{aligned}
		& \norm{\dt \rho_{\sigma,\pm}^{1/2}(t)}{\Lnorm{2}} \leq \norm{\dt \rho_{\sigma,\pm,0}^{1/2}}{\Lnorm{2}} \\
		& \qquad +  \sup_{0\leq s\leq t} \norm{\nabla \rho_{\sigma,\pm}^{1/2}(s)}{\Lnorm{6}} \times \int_0^t \norm{\dt \vec u_{\sigma,\pm}(s)}{\Lnorm{3}}\,ds \\
		& \qquad +  \sup_{0\leq s\leq t} \norm{\rho_{\sigma, \pm}(s)}{\Lnorm{\infty}}^{1/2} \times \int_0^t \norm{\dv \dt \vec u_{\sigma,\pm}(s)}{\Lnorm{2}}\,ds,
	\end{aligned}
\end{equation}
where $ \dt \rho_{\sigma,\pm,0}^{1/2} := \dt \rho_{\sigma,\pm}^{1/2}\big|_{t=0} $ is the initial data of $ \dt \rho_{\sigma,\pm}^{1/2} $ defined by \eqref{eq:rho-to-half}.


Last but not least, we will need to derive the lower bounds of $ \rho_{\sigma,\pm} $. From \subeqref{eq:2CMPNS}{1}, one can write down
\begin{align*}
	& \dt (\underline{\rho}- M(t) - \rho_{\sigma,\pm} ) + \vec u_{\sigma,\pm} \cdot \nabla (\underline{\rho}- M(t) - \rho_{\sigma,\pm} ) \\
	& \quad + (\underline{\rho}- M(t) - \rho_{\sigma,\pm} ) \dv \vec u_{\sigma,\pm} = - \dt M(t) + (\underline\rho - M(t)) \dv \vec u_{\sigma,\pm} \leq 0,
\end{align*}
where 
$$
	M(t) := \underline{\rho} \int_0^t \norm{\dv \vec u_{\sigma,\pm}(s)}{\Lnorm{\infty}} \,ds \times e^{\int_0^t \norm{\dv \vec u_{\sigma,\pm}(s)}{\Lnorm{\infty}} \,ds}.
$$
Therefore, testing the above equation with $ (\underline{\rho}- M(t) - \rho_{\sigma,\pm} )^+ $ leads to
\begin{equation*}
	\norm{(\underline{\rho}- M(t) - \rho_{\sigma,\pm} )^+}{\Lnorm{2}}^2 \leq 0,
\end{equation*}
and hence
\begin{equation}\label{est:rho-lower-bound}
\inf_{\Omega}\rho_{\sigma,\pm}(t) \geq \underline{\rho} (1 - \int_0^t \norm{\dv \vec u_{\sigma,\pm}(s)}{\Lnorm{\infty}} \,ds \times e^{\int_0^t \norm{\dv \vec u_{\sigma,\pm}(s)}{\Lnorm{\infty}} \,ds}).
\end{equation}

\subsection{Uniform-in-$ \sigma $ estimates}\label{subsec:uniform-1}

In this subsection, we summarize the estimates above and derive the uniform-in-$ \sigma $ estimates. Let $ T > 0 $ be the uniform-in-$\sigma$ life span of solutions to \eqref{eq:2CMPNS}, and denote by 
\begin{equation}\label{def:bound-dsptn-cauchy}
	\begin{gathered}
	\sup_{0\leq s\leq T}\norm{\dt \vec u_{\sigma,\pm}(s)}{\Lnorm{2}}^2+ \int_0^T \biggl\lbrace \norm{\nabla \vec u_{\sigma,\pm}(s)}{\Hnorm{2}}^2 + \norm{\nabla \dt \vec u_{\sigma,\pm}(s)}{\Lnorm{2}}^2 \biggr\rbrace \,ds \\
	 \leq \mathfrak M, \end{gathered}
\end{equation}
which will be shown to be independent of $ \sigma $. 

To shorten the notations, we will use $ C_\mrm{in}\in (0,\infty) $ through out this section to represent a generic constant depending only on
\begin{gather*}
	\underline{\rho} > 0, ~ \norm{\nabla \log \rho_{\sigma,\pm,0}}{\Lnorm{3}}, ~ \norm{\rho_{\sigma,\pm,0}^{1/2}}{\Hnorm{2}}, ~ \norm{\vec u_{\sigma,\pm,0}}{\Hnorm{2}}, ~ \norm{\vec u_{\sigma,\pm,1}}{\Lnorm{2}},
\end{gather*} 
but independent of $ \sigma $, which might be different from line to line, where $ \vec u_{\sigma, \pm, 1} = \dt \vec u_{\sigma,\pm}\big|_{t=0} $ are the initial data of $ \dt \vec u_{\sigma,\pm} $ defined by equation \subeqref{eq:2CMPNS}{2}.

Applying H\"older's inequality, one can obtain that, for $ T $ small enough and any $ t \in (0,T) $, 
\begin{equation}\label{chosing-time-001}
	\begin{gathered}
	\int_0^t \norm{\dt \vec u_{\sigma,\pm}(s)}{\Lnorm{3}} \,ds \leq C T \sup_{0\leq s\leq T}\norm{\dt \vec u_{\sigma,\pm}(s)}{\Lnorm{2}} \\
	 + C T^{1/2} \biggl(\int_0^T \norm{\nabla \dt \vec u_{\sigma,\pm}}{\Lnorm{2}}^2 \biggr)^{1/2} \leq C(T+T^{1/2})\mathfrak M^{1/2} \leq 1 ,\\
	\int_0^t \norm{\dv \dt \vec u_{\sigma,\pm}(s)}{\Lnorm{2}} \,ds \leq C T^{1/2}\biggl(\int_0^T \norm{\nabla \dt \vec u_{\sigma,\pm}}{\Lnorm{2}}^2 \biggr)^{1/2} \\
	 \leq CT^{1/2} \mathfrak M^{1/2} \leq 1  ,
	\\
	\int_0^t \norm{\nabla \vec u_{\sigma,\pm}(s)}{\Hnorm{2}} \,ds \leq T^{1/2} \biggl(\int_{0}^{T}\norm{\nabla \vec u_{\sigma,\pm}(s)}{\Hnorm{2}}^2 \,ds \biggr)^{1/2} \\
	 \leq T^{1/2}\mathfrak M^{1/2} \leq \mathfrak c_1,
	\end{gathered}
\end{equation}
with some uniform constant $ \mathfrak c_1 $, independent of $ \sigma $ and small enough such that
\begin{gather*}
	\int_0^t \norm{\dv \vec u_{\sigma,\pm}(s)}{\Lnorm{\infty}} \,ds \times e^{\int_0^t \norm{\dv \vec u_{\sigma,\pm}(s)}{\Lnorm{\infty}} \,ds}\\
	\leq C \int_0^t \norm{\nabla \vec u_{\sigma,\pm}(s)}{\Hnorm{2}} \,ds \times e^{C \int_0^t \norm{\nabla \vec u_{\sigma,\pm}(s)}{\Hnorm{2}} \,ds}
	\\
	\leq C \mathfrak c_1 \times e^{C \mathfrak c_1} \leq 1/2. 
\end{gather*}
Then from \eqref{est:rho-h2}, \eqref{est:rho-embedding},  \eqref{est:rho-w-log-w13},  \eqref{est:rho-dt},
and
\eqref{est:rho-lower-bound}, it follows that, for $ t \in (0,T] $,
\begin{equation}\label{est:rho-total-1}
\begin{gathered}
	\inf_{\Omega}\rho_{\sigma,\pm}(t) \geq \underline{\rho}/2,\\
	  \norm{\rho_{\sigma,\pm}^{1/2}(t)}{\Hnorm{2}}, ~ \norm{\rho_{\sigma,\pm}(t)}{\Lnorm{\infty}}, ~ \norm{\nabla \rho_{\sigma,\pm}^{1/2}(t)}{\Lnorm{6}}, \\
	~ \norm{\nabla\log \rho_{\sigma, \pm}(t)}{\Lnorm{3}},~ \norm{\dt \rho_{\sigma, \pm}^{1/2}(t)}{\Lnorm{2}} \leq C_{\mrm {in}}.
\end{gathered}
\end{equation}

Meanwhile, applying the Gagliardo--Nirenberg inequality, one has
\begin{equation}\label{ineq:interpolation-1}
\begin{gathered}
	\norm{\vec u_{\sigma,+}-\vec u_{\sigma,-}}{\Lnorm{\infty}}\leq C \norm{\vec u_{\sigma,+}-\vec u_{\sigma,-}}{\Hnorm{2}}^{3/4}\norm{\vec u_{\sigma,+}-\vec u_{\sigma,-}}{\Lnorm{2}}^{1/4}, \\
	\norm{\nabla (\vec u_{\sigma,+}-\vec u_{\sigma,-})}{\Lnorm{3}}\leq C \norm{\vec u_{\sigma,+}-\vec u_{\sigma,-}}{\Hnorm{2}}^{3/4}\norm{\vec u_{\sigma,+}-\vec u_{\sigma,-}}{\Lnorm{2}}^{1/4}.
\end{gathered}
\end{equation}

Therefore, after collecting \eqref{est:002}, \eqref{est:102}, and \eqref{est:103}, with suitably small $ \delta \in (0,1) $, substituting inequalities \eqref{est:rho-total-1} and \eqref{ineq:interpolation-1}, and applying \eqref{id:korn-t3}, Sobolev embedding inequalities, and Young's inequality, one can obtain
\begin{equation}\label{energy-est:total-t2}
	\begin{aligned}
	& \dfrac{d}{dt} E_{\sigma,1} + D_{\sigma,1} + \sigma \int \rho_{\sigma,+} \rho_{\sigma,-} \biggl(|\dt (\vec u_{\sigma,+}-\vec u_{\sigma,-})|^2 + |\nabla (\vec u_{\sigma,+}-\vec u_{\sigma,-})|^2 \\
	& \qquad \qquad \qquad + |\nabla^2 (\vec u_{\sigma,+}-\vec u_{\sigma,-})|^2   \biggr) \idx \\
	& \quad \leq \mathfrak c_2 \sigma \int \rho_{\sigma,+} \rho_{\sigma,-} |\vec u_{\sigma,+}-\vec u_{\sigma,-}|^2 \idx + \mathcal H( E_\sigma, E_{\sigma,1}, C_\mrm{in} ),
	\end{aligned}
\end{equation}
where $ \mathfrak c_2 \in (0,\infty) $ is a constant depending on $ C_\mrm{in} $, but independent of $ \sigma $, and
\begin{align}
	& \begin{aligned} & E_{\sigma,1} := \sum_{\pmindex} \biggl\lbrace \norm{\rho_{\sigma,\mrm j}^{1/2} \dt \vec u_{\sigma,\mrm j}}{\Lnorm{2}}^2 + \norm{\rho_{\sigma,\mrm j}^{1/2} \nabla \vec u_{\sigma,\mrm j}}{\Lnorm{2}}^2 \\
	& \qquad \qquad + \norm{\rho_{\sigma,\mrm j}^{1/2} \nabla^2 \vec u_{\sigma,\mrm j}}{\Lnorm{2}}^2 \biggr\rbrace, \end{aligned} \\
	& \begin{aligned} & D_{\sigma,1} : = \sum_{\pmindex} \bigg\lbrace \mu_\mrm j \norm{\nabla \dt \vec u_{\sigma,\mrm j}}{\Lnorm{2}}^2 +  (\lambda_\mrm j + \mu_+\mrm j) \norm{\dv \dt \vec u_{\sigma,\mrm j}}{\Lnorm{2}}^2 \\ 
	& \qquad \qquad + \mu_\mrm j \norm{\nabla^2\vec u_{\sigma,\mrm j} }{\Lnorm{2}}^2 +  (\lambda_\mrm j + \mu_\mrm j) \norm{\nabla \dv \vec u_{\sigma,\mrm j}}{\Lnorm{2}}^2 \\ 
	& \qquad \qquad + \mu_\mrm j \norm{\nabla^3 \vec u_{\sigma,\mrm j}}{\Lnorm{2}}^2 +  (\lambda_\mrm j + \mu_\mrm j) \norm{\nabla^2 \dv \vec u_{\sigma,\mrm j}}{\Lnorm{2}}^2
	\biggr\rbrace.
	\end{aligned}
\end{align}
Consequently, \eqref{cnsvt:energy} and \eqref{energy-est:total-t2} yield that, there is some $ T^* \in (0,T] $, independent of $ \sigma $, such that for any $ t \in (0, T^*) $,
\begin{equation}\label{est:total-final-t2}
	\begin{aligned}
		& \sup_{0\leq s\leq t} \bigl\lbrace E_\sigma(s) + E_{\sigma,1}(s) \bigr\rbrace + \int_0^t \bigl\lbrace D_\sigma(s) + D_{\sigma,1}(s) \bigr\rbrace \,ds \\
		& \qquad + \sigma \int_0^t \bigl\lbrace \norm{(\vec u_{\sigma,+}-\vec u_{\sigma,-})(s)}{\Hnorm{2}}^2 +  \norm{\dt (\vec u_{\sigma,+}-\vec u_{\sigma,-})(s)}{\Lnorm{2}}^2 \bigr\rbrace \,ds\\
		& \quad\leq C_\mrm{in},
	\end{aligned}
\end{equation}
which, in particular, implies
\begin{equation}\label{est:update-bound}
	\begin{gathered}
	\sup_{0\leq s\leq T^*}\norm{\dt \vec u_{\sigma,\pm}(s)}{\Lnorm{2}}^2+ \int_0^{T^*} \biggl\lbrace \norm{\nabla \vec u_{\sigma,\pm}(s)}{\Hnorm{2}}^2 + \norm{\nabla \dt \vec u_{\sigma,\pm}(s)}{\Lnorm{2}}^2 \biggr\rbrace \,ds \\ \leq \mathfrak M^*(C_{\mrm{in}}).\end{gathered}
\end{equation}

Now, we update $ \mathfrak M $ to $ \mathfrak M^*(C_\mrm{in}) $, and $ T $ to $ T^* $ accordingly. To this end, we conclude this section with \eqref{est:rho-total-1} and the following uniform estimates:
\begin{equation}\label{est:u-total-1}
	\begin{aligned}
	& \sup_{0\leq s\leq T} \biggl\lbrace \norm{\vec u_{\sigma,\pm}(s)}{\Hnorm{2}}^2 + \norm{\dt \vec u_{\sigma,\pm}(s)}{\Lnorm{2}}^2 \biggr\rbrace \\
	& \quad + \int_0^T \biggl\lbrace \norm{\vec u_{\sigma,\pm}(s)}{\Hnorm{3}}^2 + \norm{\dt \vec u_{\sigma,\pm}(s)}{\Hnorm{1}}^2 \biggr\rbrace \,ds \\
	& \quad + \sigma \int_0^T \biggl\lbrace \norm{(\vec u_{\sigma,+}-\vec u_{\sigma,-})(s)}{\Hnorm{2}}^2 +  \norm{\dt (\vec u_{\sigma,+}-\vec u_{\sigma,-})(s)}{\Lnorm{2}}^2 \biggr\rbrace \,ds\\
	& \qquad \leq C_{\mrm{in},T}.
	\end{aligned}
\end{equation}

On the other hand, we can rewrite \subeqref{eq:2CMPNS}{2} as
\begin{equation}\label{eq:mmt-reqrite}
\dt \vec u_{\sigma,\pm} + \vec u_{\sigma,\pm} \cdot \nabla \vec u_{\sigma,\pm} + \dfrac{\nabla p_{\sigma,\pm} - \dv \mathbb S_{\sigma,\pm}}{\rho_{\sigma,\pm}} = \sigma \rho_{\sigma,\mp} (\vec u_{\sigma,\mp} - \vec u_{\sigma,\pm}).
\end{equation}
Then after subtracting the $ + $--equation by the $ - $--equation of \eqref{eq:conv-2}, one obtains
\begin{equation}\label{eq:+--}
\begin{aligned}
& \dt (\vec u_{\sigma,+} - \vec u_{\sigma,-}) + \vec u_{\sigma,+} \cdot \nabla \vec u_{\sigma,+} - \vec u_{\sigma,-} \cdot \nabla \vec u_{\sigma,-} \\
& \qquad + \dfrac{\nabla p_{\sigma,+} - \dv \mathbb S_{\sigma,+}}{\rho_{\sigma,+}} - \dfrac{\nabla p_{\sigma,-} - \dv \mathbb S_{\sigma,-}}{\rho_{\sigma,-}}\\
& \quad = \sigma ( \rho_{\sigma, +} + \rho_{\sigma, -}) ( \vec u_{\sigma,-} - \vec u_{\sigma,+}).
\end{aligned}
\end{equation}
Therefore, utilizing \eqref{est:rho-total-1} and \eqref{est:u-total-1}, one can derive from \eqref{eq:+--} that
\begin{equation}\label{est:u-totla-1-02}
	\begin{aligned}
	& \sup_{0\leq s\leq T}\biggl\lbrace \sigma^2 \norm{(\vec{u}_{\sigma,+} - \vec  u_{\sigma,-})(s)}{\Lnorm{2}}^2 + \sigma \norm{(\vec{u}_{\sigma,+} - \vec  u_{\sigma,-})(s)}{\Hnorm{1}}^2 \biggr\rbrace \\
	&\qquad + \sigma^2 \int_0^T  \norm{(\vec{u}_{\sigma,+} - \vec  u_{\sigma,-})(s)}{\Hnorm{1}}^2 \leq C_{\mrm{in},T},
	\end{aligned}
\end{equation}
where we have applied Cauchy's inequality and the fact that
\begin{equation*}
	\norm{\vec{u}_{\sigma,+} - \vec  u_{\sigma,-}}{\Hnorm{1}}^2 \leq \norm{\vec{u}_{\sigma,+} - \vec  u_{\sigma,-}}{\Lnorm{2}}\norm{\vec{u}_{\sigma,+} - \vec  u_{\sigma,-}}{\Hnorm{2}}.
\end{equation*}


\section{Uniform estimates: $ \Omega = \mathbb T^2 \times (0,1) $}\label{sec:flat}

In this section, we study \eqref{eq:2CMPNS} in domain $ \mathbb T^2 \times (0,1) $. The goal is to obtain uniform estimates in order to pass the limit \eqref{asymptote}.
\subsection{Known estimates}

First, we shall record the estimates from section \ref{sec:t3} that still work, with minor modifications. In order to do so, we remind readers that, instead of using \eqref{id:korn-t3}, we will replace it by Korn's inequality in this section. 

Then estimate \eqref{est:002} holds, and all estimates in section \ref{subsec:density-t3} hold. 

Next, directly, one can check
\begin{equation}\label{est:001-flat}
	\dfrac{d}{dt} \norm{\nabla \vec u_{\sigma,\pm}}{\Lnorm{2}}^2 \leq 2 \norm{\nabla \vec u_{\sigma,\pm}}{\Lnorm{2}}\norm{\nabla\dt \vec u_{\sigma,\pm}}{\Lnorm{2}}.
\end{equation}


Thus, we only need to obtain the uniform $ H^2 $-- and $ H^3 $--estimates of $ \vec u_{\sigma,\pm} $ to close the uniform estimates. This will be done in the next two subsections.

\subsection{Normal derivative estimates: $ H^2 $}

We start by rewriting
\begin{equation}\label{id:viscosity-tensor}
	\begin{aligned}
	& \dv \mathbb S_{\sigma,\pm}  = \mu \Delta \vec u_{\sigma,\pm} + (\mu + \lambda) \nabla \dv \vec u_{\sigma,\pm}\\
	& \quad = \mu \Delta_h \vec u_{\sigma,\pm} + (\mu+\lambda) \left(\begin{array}{c}
	\nabla_h \dv \vec u_{\sigma,\pm} \\
	\partial_z \dvh \vec v_{\sigma,\pm}
	\end{array} \right) + \left(\begin{array}{c}
	\mu \partial_{zz} \vec v_{\sigma,\pm} \\
	(2\mu+\lambda) \partial_{zz} w_{\sigma,\pm}
	\end{array}\right),
	\end{aligned}
\end{equation}
and denote by 
\begin{equation}\label{def:lhs-mmt}
	\mathcal A_{\sigma,\pm} := \rho_{\sigma,\pm} \dt \vec u_{\sigma,\pm} + \rho_{\sigma,\pm} \vec u_{\sigma,\pm} \cdot \nabla \vec u_{\sigma,\pm} + \nabla p_{\sigma,\pm}.
\end{equation}
Thus \subeqref{eq:2CMPNS}{2} can be written as
	\begin{equation*}
		\mu \Delta \vec u_{\sigma,\pm} + (\mu + \lambda) \nabla \dv \vec u_{\sigma,\pm} + \sigma \rho_{\sigma,\pm} \rho_{\sigma,\mp} (\vec u_{\sigma,\mp} - \vec u_{\sigma,\pm}) = \mathcal A_{\sigma,\pm}.
	\end{equation*}
Therefore, after subtracting and adding the $+$--equation and $-$--equation, respectively, one has
\begin{gather}
	\mu\Delta \vec u_{\sigma,++-} +(\mu+\lambda) \nabla\dv \vec u_{\sigma,++-} = \mathcal A_{\sigma,+} + \mathcal A_{\sigma,-}, \label{eq:flat++-}  \\
	\mu\Delta \vec u_{\sigma,+--}  +(\mu+\lambda) \nabla\dv \vec u_{\sigma,+--} - 2 \sigma \rho_{\sigma,+}\rho_{\sigma,-} \vec u_{\sigma,+--}  = \mathcal A_{\sigma,+} - \mathcal A_{\sigma,-}, \label{eq:flat+--}
\end{gather}
where
\begin{equation}\label{def:flat-velocity}
	\vec u_{\sigma,++-}:= \vec u_{\sigma,+} + \vec u_{\sigma,-}, \quad \vec u_{\sigma,+--}:=\vec u_{\sigma,+} - \vec u_{\sigma,-}.
\end{equation}

After testing \eqref{eq:flat++-} with $ - \vec u_{\sigma,++-} $, $ \Delta_h \vec u_{\sigma,++-} $, respectively, and applying integration by parts and H\"older's inequality in the resultant, one can derive
\begin{equation}\label{est:flat-horizontal-1}
	\begin{gathered}
	\mu \norm{\nabla\vec u_{\sigma,++-},\nabla \nabla_h \vec u_{\sigma,++-}}{\Lnorm{2}}^2 + (\mu+\lambda) \norm{\dv \vec u_{\sigma,++-},\nabla_h \dv \vec u_{\sigma,++-}}{\Lnorm{2}}^2 \\ \leq \sum_{\pmindex} \norm{\mathcal A_{\sigma,\mrm j}}{\Lnorm{2}}\norm{\vec u_{\sigma,\mrm j},\nabla_h^2 \vec u_{\sigma,\mrm j}}{\Lnorm{2}}.
	\end{gathered}  
\end{equation}
Notice, the identity \eqref{id:viscosity-tensor} holds for $ \vec u_{\sigma,\pm} $ replaced by $ \vec u_{\sigma,++-} $. Thus, one can derive, straightforwards, that
\begin{gather*}
	\norm{\nabla^2 \vec u_{\sigma,++-}}{\Lnorm{2}} \leq C \norm{\nabla \nabla_h \vec u_{\sigma,++-}}{\Lnorm{2}}\\
	 + \norm{\mu\Delta \vec u_{\sigma,++-} +(\mu+\lambda) \nabla\dv \vec u_{\sigma,++-}}{\Lnorm{2}}.
\end{gather*}
Hence \eqref{est:flat-horizontal-1} implies 
\begin{equation}\label{est:flat-h2-++-}
\norm{\vec u_{\sigma,++-}}{\Hnorm{2}}\leq C \sum_{\pmindex} \bigl( \norm{\mathcal A_{\sigma,\mrm j}}{\Lnorm{2}} + \norm{\vec u_{\sigma,\mrm j}}{\Lnorm{2}} \bigr).
\end{equation}

Similarly, after testing \eqref{eq:flat+--} with $ - \vec u_{\sigma,+--} $, $ \Delta_h \vec u_{\sigma,+--} $, respectively, and applying integration by parts, H\"older's inequality, and Cauchy's inequality in the resultant, one can derive
\begin{equation}\label{est:flat-horizontal-2}
\begin{gathered}
\mu \norm{\nabla\vec u_{\sigma,+--},\nabla \nabla_h \vec u_{\sigma,+--}}{\Lnorm{2}}^2 + (\mu+\lambda) \norm{\dv \vec u_{\sigma,+--},\nabla_h \dv \vec u_{\sigma,+--}}{\Lnorm{2}}^2 \\ 
+ \sigma \int \rho_{\sigma,+} \rho_{\sigma,-} \bigl( 2|\vec u_{\sigma,+--}|^2 +  |\nabla_h \vec u_{\sigma,+--}|^2 \bigr)\idx 
\\
\leq \sum_{\pmindex} \norm{\mathcal A_{\sigma,\mrm j}}{\Lnorm{2}}\norm{\vec u_{\sigma,\mrm j},\nabla_h^2 \vec u_{\sigma,\mrm j}}{\Lnorm{2}} \\
 + 4 \sigma \norm{\nabla_h(\rho_{\sigma,+}\rho_{\sigma,-})^{1/2}}{\Lnorm{6}}^2 \norm{\vec u_{\sigma,+--}}{\Lnorm{3}}^2,
\end{gathered}  
\end{equation}
where we have applying the following estimates,
\begin{align*}
	& - 2 \sigma \int \rho_{\sigma, +}\rho_{\sigma, -} \vec u_{\sigma,+--} \cdot \Delta_h \vec u_{\sigma,+--} \idx \\
	& \quad = 2 \sigma \int \rho_{\sigma, +}\rho_{\sigma, -} |\nabla_h \vec u_{\sigma,+--}|^2 \idx \\
	& \qquad + 2\sigma \sum_{\partial_h \in \lbrace \partial_x,\partial_y \rbrace }\int \partial_h(\rho_{\sigma, +}\rho_{\sigma, -} ) \vec u_{\sigma,+--} \cdot \partial_h \vec u_{\sigma,+--} \idx\\
	& \quad \geq \sigma \int \rho_{\sigma, +}\rho_{\sigma, -} |\nabla_h \vec u_{\sigma,+--}|^2 \idx \\
	& \qquad - 4 \sigma \norm{\nabla_h(\rho_{\sigma,+}\rho_{\sigma,-})^{1/2}}{\Lnorm{6}}^2 \norm{\vec u_{\sigma,+--}}{\Lnorm{3}}^2.
\end{align*}

In addition, due to \eqref{id:viscosity-tensor} and \eqref{eq:flat+--}, one has
\begin{equation}\label{est:flat-vertical-1}
\begin{aligned}
& \norm{\mu \partial_{zz} \vec v_{\sigma,+--} - 2\sigma \rho_{\sigma,+}\rho_{\sigma,-} \vec v_{\sigma,+--}}{\Lnorm{2}} \\
& \quad + \norm{(2\mu+\lambda)\partial_{zz} w_{\sigma,+--} - 2 \sigma \rho_{\sigma,+}\rho_{\sigma,-} w_{\sigma,+--}}{\Lnorm{2}}\\
& \leq C\norm{\nabla \nabla_h \vec u_{\sigma,+--}}{\Lnorm{2}} +  \sum_{\pmindex} \norm{\mathcal A_{\sigma,\mrm j}}{\Lnorm{2}}.
\end{aligned}
\end{equation}
Directly applying integration by parts yields
\begin{gather}
	\begin{aligned} & \norm{\mu \partial_{zz} \vec v_{\sigma,+--} - 2\sigma \rho_{\sigma,+}\rho_{\sigma,-} \vec v_{\sigma,+--}}{\Lnorm{2}}^2 
	= \mu^2 \norm{\partial_{zz} \vec v_{\sigma,+--}}{\Lnorm{2}}^2 \\ & \qquad + 4 \sigma^2 \norm{\rho_{\sigma,+}\rho_{\sigma,-} \vec v_{\sigma,+--}}{\Lnorm{2}}^2
	+ 4 \mu \sigma \int \rho_{\sigma,+} \rho_{\sigma,-} |\partial_z \vec v_{\sigma,+--}|^2 \idx\\
	& \qquad + 4 \mu\sigma \int \partial_z (\rho_{\sigma,+} \rho_{\sigma,-}) \vec v_{\sigma,+--}\cdot \partial_z \vec v_{\sigma,+--} \idx,\end{aligned} \label{est:flat-vertical-2}  \\
	\begin{aligned}
	& \norm{(2\mu+\lambda) \partial_{zz} w_{\sigma,+--} - 2\sigma \rho_{\sigma,+}\rho_{\sigma,-} w_{\sigma,+--}}{\Lnorm{2}}^2 
	\\ & \quad = (2\mu+\lambda)^2 \norm{\partial_{zz} w_{\sigma,+--}}{\Lnorm{2}}^2  + 4 \sigma^2 \norm{\rho_{\sigma,+}\rho_{\sigma,-} w_{\sigma,+--}}{\Lnorm{2}}^2
	\\ & \qquad + 4 (2\mu+\lambda) \sigma \int \rho_{\sigma,+} \rho_{\sigma,-} |\partial_z w_{\sigma,+--}|^2 \idx\\
	& \qquad + 4 (2\mu+\lambda) \sigma \int \partial_z (\rho_{\sigma,+} \rho_{\sigma,-}) w_{\sigma,+--}\cdot \partial_z w_{\sigma,+--} \idx.
	\end{aligned} \label{est:flat-vertical-3}
\end{gather}
Therefore,  \eqref{est:flat-horizontal-2}, \eqref{est:flat-vertical-1}, \eqref{est:flat-vertical-2}, and \eqref{est:flat-vertical-3} yield, after applying H\"older's inequality and Cauchy's inequality,
\begin{equation}\label{est:flat-h2-+--}
	\begin{aligned}
	& \norm{\vec u_{\sigma,+--}}{\Hnorm{2}}^2 + \sigma \int \rho_{\sigma,+}\rho_{\sigma,-} \bigl( |\vec u_{\sigma,+--}|^2 + |\nabla \vec u_{\sigma,+--}|^2 \bigr) \idx \\
	& \qquad + \sigma^2 \norm{\rho_{\sigma, +}\rho_{\sigma, -}\vec u_{\sigma,+--}}{\Lnorm{2}}^2 \leq C \sum_{\pmindex} \bigl( \norm{\mathcal A_{\sigma,\mrm j}}{\Lnorm{2}}^2 + \norm{\vec u_{\sigma,\mrm j}}{\Lnorm{2}}^2 \bigr) \\
	& \qquad +  C \sigma \norm{\nabla_h(\rho_{\sigma,+}\rho_{\sigma,-})^{1/2}}{\Lnorm{6}}^2 \norm{\vec u_{\sigma,+--}}{\Lnorm{3}}^2.
	\end{aligned}
\end{equation}

To finish this subsection, we write down the estimates of $ \norm{\mathcal A_{\sigma,\pm}}{\Lnorm{2}} $.
In fact,
\begin{align*}
	& \norm{\mathcal A_{\sigma,\pm}}{\Lnorm{2}} \leq  \norm{\rho_{\sigma,\pm}}{\Lnorm{\infty}}^{1/2} \norm{\rho_{\sigma,\pm}^{1/2}\dt \vec u_{\sigma,\pm}}{\Lnorm{2}} \\
	& \qquad  + \norm{\rho_{\sigma, \pm}}{\Lnorm{\infty}} \norm{\vec u_{\sigma,\pm}}{\Lnorm{6}} \norm{\nabla \vec u_{\sigma,\pm}}{\Lnorm{3}} + \norm{\nabla p_{\sigma,\pm}}{\Lnorm{2}} \\
	&  \leq \norm{\rho_{\sigma,\pm}}{\Lnorm{\infty}}^{1/2} \norm{\rho_{\sigma,\pm}^{1/2}\dt \vec u_{\sigma,\pm}}{\Lnorm{2}} + \norm{\nabla p_{\sigma,\pm}}{\Lnorm{2}} \\
	& \qquad + \norm{\rho_{\sigma, \pm}}{\Lnorm{\infty}} \bigl( \norm{\rho_{\sigma,\pm}^{1/2}\vec u_{\sigma,\pm}}{\Lnorm{2}} +  \norm{\nabla \vec u_{\sigma,\pm}}{\Lnorm{2}} \bigr) \\
	& \qquad \qquad \times \norm{\nabla \vec u_{\sigma,\pm}}{\Lnorm{2}}^{1/2} \norm{\vec u_{\sigma,\pm}}{\Hnorm{2}}^{1/2}.
\end{align*}
Consequently, combining \eqref{est:flat-h2-++-} and \eqref{est:flat-h2-+--} yields
\begin{equation}\label{est:flat-h2}
	\begin{aligned}
		& \norm{\vec u_{\sigma,\pm}}{\Hnorm{2}}^2 + \sigma^2 \norm{\rho_{\sigma, +}\rho_{\sigma, -}(\vec u_{\sigma,+} - \vec u_{\sigma,-})}{\Lnorm{2}}^2  \\
		& \qquad  + \sigma \int \rho_{\sigma,+}\rho_{\sigma,-} \bigl( |\vec u_{\sigma,+} - \vec u_{\sigma,-}|^2 + |\nabla (\vec u_{\sigma,+} - \vec u_{\sigma,-})|^2 \bigr) \idx \\
		& \quad \leq \mathcal H(\norm{\rho_{\sigma,\pm}}{\Lnorm{\infty}}, \norm{\nabla \rho_{\sigma,\pm}}{\Lnorm{2}},
		\norm{\rho_{\sigma,\pm}^{1/2}\vec u_{\sigma,\pm}}{\Lnorm{2}}, \norm{\nabla\vec u_{\sigma,\pm}}{\Lnorm{2}}, \\
		&\qquad \qquad \norm{\rho_{\sigma, \pm}^{1/2}\dt \vec u_{\sigma,\pm}}{\Lnorm{2}}) \\
		& \qquad
		+  C \sigma \biggl( 1 + \sup_{\vec x \in \Omega} \dfrac{1}{\rho_{\sigma, +}\rho_{\sigma, -}} \biggr) \norm{\nabla_h(\rho_{\sigma,+}\rho_{\sigma,-})^{1/2}}{\Lnorm{6}}^2 \norm{\vec u_{\sigma,+--}}{\Lnorm{2}}^2,
	\end{aligned}
\end{equation}
where we have applied
\begin{equation*}\label{ineq:interpolation-2}
\norm{\vec u_{\sigma,+--}}{\Lnorm{3}} \leq C  \norm{\vec u_{\sigma,+--}}{\Lnorm{2}}^{1/2} 
\bigl( \norm{\vec u_{\sigma,+--}}{\Lnorm{2}}^{1/2} + \norm{\nabla \vec u_{\sigma,+--}}{\Lnorm{2}}^{1/2} \bigr).
\end{equation*}

\subsection{Normal derivative estimates: $H^3$}

Recall that $ \partial_h \in \lbrace \partial_x,\partial_y \rbrace $. Applying $\partial_h$ to \eqref{eq:flat++-} and \eqref{eq:flat+--} leads to
\begin{gather*}
	\mu\Delta \partial_h \vec u_{\sigma,++-} +(\mu+\lambda) \nabla\dv \partial_h \vec u_{\sigma,++-} = \partial_h \mathcal A_{\sigma,+} + \partial_h \mathcal A_{\sigma,-},  \\
	\mu\Delta \partial_h \vec u_{\sigma,+--}  +(\mu+\lambda) \nabla\dv \partial_h \vec u_{\sigma,+--} - 2 \sigma \rho_{\sigma,+}\rho_{\sigma,-} \partial_h\vec u_{\sigma,+--}  \\
	= \partial_h\mathcal A_{\sigma,+} - \partial_h \mathcal A_{\sigma,-} + 2 \sigma \partial_h (\rho_{\sigma,+}\rho_{\sigma,-}) \vec u_{\sigma,+--}.
\end{gather*}
Then, by performing similar estimates from \eqref{est:flat-horizontal-1} to \eqref{est:flat-h2-+--}, one can obtain the following estimates:
\begin{gather}
	\norm{\nabla_h \vec u_{\sigma,++-}}{\Hnorm{2}}\leq C \sum_{\pmindex} \bigl( \norm{\nabla_h \mathcal A_{\sigma,\mrm j}}{\Lnorm{2}} + \norm{\nabla_h \vec u_{\sigma,\mrm j}}{\Lnorm{2}} \bigr)
	\label{est:flat-h3-++--01}, \\
	\begin{aligned}
	& \norm{\nabla_h \vec u_{\sigma,+--}}{\Hnorm{2}}^2 + \sigma \int \rho_{\sigma,+}\rho_{\sigma,-} \bigl( |\nabla_h \vec u_{\sigma,+--}|^2 + |\nabla_h \nabla \vec u_{\sigma,+--}|^2 \bigr) \idx \\
	& \qquad + \sigma^2 \norm{\rho_{\sigma, +}\rho_{\sigma, -}\nabla_h \vec u_{\sigma,+--}}{\Lnorm{2}}^2 \leq C \sum_{\pmindex} \bigl( \norm{\nabla_h \mathcal A_{\sigma,\mrm j}}{\Lnorm{2}}^2 \\
	& \qquad \qquad \qquad + \norm{\nabla_h \vec u_{\sigma,\mrm j}}{\Lnorm{2}}^2 \bigr) \\
	& \qquad +  C \sigma \norm{\nabla_h(\rho_{\sigma,+}\rho_{\sigma,-})^{1/2}}{\Lnorm{6}}^2 \norm{\nabla_h \vec u_{\sigma,+--}}{\Lnorm{3}}^2 \\
	& \qquad + C \sigma^2 \norm{\nabla_h(\rho_{\sigma, +}\rho_{\sigma, -})}{\Lnorm{6}}^2 \norm{\vec u_{\sigma,+--}}{\Lnorm{3}}^2.
	\end{aligned}\label{est:flat-h3-+---01}
\end{gather}

On the other hand, after applying $ \partial_z $ to \eqref{eq:flat++-}, together with identity \eqref{id:viscosity-tensor}, it follows:
\begin{align*}
	& \left(\begin{array}{c}
	\mu \partial_{zzz} \vec v_{\sigma,++-} \\
	(2\mu+\lambda) \partial_{zzz} w_{\sigma,++-} 
	\end{array} \right) 
	 =  \partial_z \mathcal A_{\sigma,+} + \partial_z  \mathcal A_{\sigma,-} \\
	 & \qquad - \mu \Delta_h \partial_z \vec u_{\sigma,++-} - (\mu+\lambda) \left(\begin{array}{c}
	\nabla_h\dv \partial_z \vec u_{\sigma,++-}\\
	\partial_z \dvh \partial_{z} \vec v_{\sigma,++-}
	\end{array}\right).
\end{align*}
Therefore, it holds:
\begin{equation}\label{est:flat-h3-++-02}
	\norm{\partial_{zzz}\vec u_{\sigma,++-}}{\Lnorm{2}} \leq C \sum_{\pmindex} \norm{\partial_z \mathcal A_{\sigma,\mrm j}}{\Lnorm{2}}
	 + C \norm{\nabla_h \vec u_{\sigma,++-}}{\Hnorm{2}}. 
\end{equation}

Similarly, after applying $ \partial_z $ to \eqref{eq:flat+--}, together with identity \eqref{id:viscosity-tensor}, it follows:
\begin{equation}\label{id:flat-h3-vertical}
	\begin{aligned}
	& \left(\begin{array}{c}
	\mu \partial_{zzz} \vec v_{\sigma,+--} - 2 \sigma \rho_{\sigma, +}\rho_{\sigma, -} \partial_z \vec v_{\sigma,+--} \\
	(2\mu+\lambda) \partial_{zzz} w_{\sigma,+--} - 2 \sigma \rho_{\sigma,+}\rho_{\sigma, -} \partial_z w_{\sigma,+--}
	\end{array} \right) \\
	 & =  \partial_z \mathcal A_{\sigma,+} - \partial_z  \mathcal A_{\sigma,-} + 2 \sigma  \partial_z (\rho_{\sigma, +}\rho_{\sigma, -}) \vec u_{\sigma,+--} \\
	 & \qquad  - \mu \Delta_h \partial_z \vec u_{\sigma,+--} - (\mu+\lambda) \left(\begin{array}{c}
	  \nabla_h\dv \partial_z \vec u_{\sigma,+--}\\
	  \partial_z \dvh \partial_{z} \vec v_{\sigma,+--}
	  \end{array}\right).
	\end{aligned}
\end{equation}
Again, directly applying integration by parts yields
\begin{gather}
	\begin{aligned}
	& \norm{\mu \partial_{zzz} \vec v_{\sigma,+--} - 2\sigma \rho_{\sigma,+}\rho_{\sigma,-} \partial_z \vec v_{\sigma,+--}}{\Lnorm{2}}^2 
	= \mu^2 \norm{\partial_{zzz} \vec v_{\sigma,+--}}{\Lnorm{2}}^2 \\ & \qquad + 4 \sigma^2 \norm{\rho_{\sigma,+}\rho_{\sigma,-}\partial_z \vec v_{\sigma,+--}}{\Lnorm{2}}^2
	+ 4 \mu \sigma \int \rho_{\sigma,+} \rho_{\sigma,-} |\partial_{zz} \vec v_{\sigma,+--}|^2 \idx\\
	& \qquad + 4 \mu\sigma \int \partial_z (\rho_{\sigma,+} \rho_{\sigma,-}) \partial_z \vec v_{\sigma,+--}\cdot \partial_{zz} \vec v_{\sigma,+--} \idx,
	\end{aligned} \label{est:flat-vertical-4} \\
	\begin{aligned}
	& \norm{(2\mu+\lambda) \partial_{zzz} w_{\sigma,+--} - 2\sigma \rho_{\sigma,+}\rho_{\sigma,-} \partial_z w_{\sigma,+--}}{\Lnorm{2}}^2 
	\\ & \quad = (2\mu+\lambda)^2 \norm{\partial_{zzz} w_{\sigma,+--}}{\Lnorm{2}}^2  + 4 \sigma^2 \norm{\rho_{\sigma,+}\rho_{\sigma,-} \partial_z w_{\sigma,+--}}{\Lnorm{2}}^2
	\\ & \qquad + 4 (2\mu+\lambda) \sigma \int \rho_{\sigma,+} \rho_{\sigma,-} |\partial_{zz} w_{\sigma,+--}|^2 \idx\\
	& \qquad + 4 (2\mu+\lambda) \sigma \int \partial_z (\rho_{\sigma,+} \rho_{\sigma,-}) \partial_{z} w_{\sigma,+--}\cdot \partial_{zz} w_{\sigma,+--} \idx \\
	& \qquad - 4 (2\mu+\lambda) \sigma \biggl(\int_{\mathbb T^2}  \rho_{\sigma,+} \rho_{\sigma,-} \partial_{z} w_{\sigma,+--}\cdot \partial_{zz} w_{\sigma,+--} \,dS\biggr) \bigg|_{z = 0}^{1}. 
	\end{aligned} \label{est:flat-vertical-5}
\end{gather}
Moreover, applying the trace embedding inequality and the Gagliardo-Nirenberg inequality implies
\begin{equation}\label{est:flat-vertical-6}
\begin{aligned}
	& \biggl(\int_{\mathbb T^2}  \rho_{\sigma,+} \rho_{\sigma,-} \partial_{z} w_{\sigma,+--}\cdot \partial_{zz} w_{\sigma,+--} \,dS\biggr) \bigg|_{z = 0}^{1} \leq \norm{\rho_{\sigma,+}\rho_{\sigma,-}}{\bLnorm{\infty}} \\
	& \qquad \times \norm{\partial_{z}w_{\sigma,+--}}{\bLnorm{2}}\norm{\partial_{zz} w_{\sigma,+--}}{\bLnorm{2}} \leq \norm{\rho_{\sigma,+}\rho_{\sigma,-}}{\bHnorm{3/2}} \\
	& \qquad \times \norm{\partial_{z}w_{\sigma,+--}}{\Hnorm{1/2}}\norm{\partial_{zz} w_{\sigma,+--}}{\Hnorm{1/2}} \leq C \norm{\rho_{\sigma,+}\rho_{\sigma,-}}{\Hnorm{2}} \\
	& \qquad \times \norm{\partial_{z}w_{\sigma,+--}}{\Lnorm{2}}^{1/2} \norm{\partial_z w_{\sigma,+--}}{\Hnorm{1}}^{1/2} \norm{\partial_{zz} w_{\sigma,+--}}{\Lnorm{2}}^{1/2}\\
	& \qquad \times \norm{\partial_{zz} w_{\sigma,+--}}{\Hnorm{1}}^{1/2}.
\end{aligned}
\end{equation}
Therefore, combining \eqref{id:flat-h3-vertical}--\eqref{est:flat-vertical-6} leads to
\begin{equation}\label{est:flat-h3-+--02}
	\begin{aligned}
		& \norm{\partial_{zzz} \vec u_{\sigma,+--} }{\Lnorm{2}}^2 + \sigma^2 \norm{\rho_{\sigma, +}\rho_{\sigma, -}\partial_z \vec u_{\sigma,+--}}{\Lnorm{2}}^2 \\
		& \qquad + \sigma \int \rho_{\sigma, +}\rho_{\sigma, -}|\partial_{zz}\vec u_{\sigma,+--}|^2 \idx
		 \leq C \sum_{\pmindex}\norm{\partial_z \mathcal A_{\sigma,\mrm j}}{\Lnorm{2}}^2 \\
		&\qquad + C \norm{\nabla_h \vec u_{\sigma,+--}}{\Hnorm{2}}^2 
		+ C \sigma \norm{\partial_{z}(\rho_{\sigma, +}\rho_{\sigma, -})^{1/2}}{\Lnorm{6}}^2 \norm{\partial_{z}\vec u_{\sigma,+--}}{\Lnorm{3}}^2 \\
		& \qquad + C \sigma^2 \norm{\partial_z(\rho_{\sigma, +}\rho_{\sigma, -})}{\Lnorm{6}}^2 \norm{\vec u_{\sigma,+--}}{\Lnorm{3}}^2 \\
		& \qquad + \mathcal H(\norm{\rho_{\sigma,\pm}}{\Hnorm{2}},\norm{\vec u_{\sigma,\pm}}{\Hnorm{2}}).
	\end{aligned}
\end{equation}

To finish this subsection, we write down the estimates of $ \norm{\nabla \mathcal A_{\sigma,\pm}}{\Lnorm{2}} $. Direct calculation yields,
\begin{align*}
	& \norm{\nabla \mathcal A_{\sigma,\pm}}{\Lnorm{2}} \leq C \norm{\nabla \rho_{\sigma,\pm}^{1/2}}{\Lnorm{6}} \norm{\rho_{\sigma,\pm}^{1/2} \dt \vec u_{\sigma,\pm}}{\Lnorm{3}} \\
	& \qquad + C \norm{\rho_{\sigma,\pm}}{\Lnorm{\infty}} \norm{\nabla\dt \vec u_{\sigma,\pm}}{\Lnorm{2}} \\
	& \qquad + C \norm{\nabla \rho_{\sigma,\pm}}{\Lnorm{6}}\norm{\vec u_{\sigma,\pm}}{\Lnorm{\infty}} \norm{\nabla\vec u_{\sigma,\pm}}{\Lnorm{3}} \\
	& \qquad + C \norm{\rho_{\sigma, \pm}}{\Lnorm{\infty}} \norm{\nabla \vec u_{\sigma,\pm}}{\Lnorm{6}} \norm{\nabla\vec u_{\sigma,\pm}}{\Lnorm{3}}\\
	& \qquad + C \norm{\rho_{\sigma, \pm}}{\Lnorm{\infty}} \norm{ \vec u_{\sigma,\pm}}{\Lnorm{\infty}} \norm{\nabla^2\vec u_{\sigma,\pm}}{\Lnorm{2}} + C \norm{\nabla^2 p_{\sigma,\pm}}{\Lnorm{2}}.
\end{align*}
Consequently, combining \eqref{est:flat-h3-++--01}, \eqref{est:flat-h3-+---01}, \eqref{est:flat-h3-++-02}, and \eqref{est:flat-h3-+--02} yields
\begin{equation}\label{est:flat-h3}
	\begin{aligned}
	& \norm{\vec u_{\sigma,\pm}}{\Hnorm{3}}^2 + \sigma^2 \norm{\rho_{\sigma, +}\rho_{\sigma, -}\nabla(\vec u_{\sigma,+}-\vec u_{\sigma,-})}{\Lnorm{2}}^2 \\
	& \qquad + \sigma \int \rho_{\sigma, +}\rho_{\sigma, -} | \nabla^2 (\vec u_{\sigma,+} - \vec u_{\sigma,-})|^2 \idx \\
	& \leq C \sigma \int \rho_{\sigma, +}\rho_{\sigma, -} | \nabla (\vec u_{\sigma,+} - \vec u_{\sigma,-})|^2 \idx \\
	& \qquad + C \sigma \biggl(1+\sup_{\vec x\in\Omega}\dfrac{1}{\rho_{\sigma,+}\rho_{\sigma, -}} \biggr) \norm{\nabla(\rho_{\sigma,+}\rho_{\sigma,-})^{1/2}}{\Lnorm{6}}^2 \norm{\nabla (\vec u_{\sigma,+}-\vec u_{\sigma,-})}{\Lnorm{2}}^2 \\
	& \qquad + C \sigma^2 \biggl(1+\sup_{\vec x\in\Omega}\dfrac{1}{(\rho_{\sigma,+}\rho_{\sigma, -})^2} \biggr)  \norm{\nabla(\rho_{\sigma, +}\rho_{\sigma, -})}{\Lnorm{6}}^2 \norm{\vec u_{\sigma,+}-\vec u_{\sigma,-}}{\Lnorm{2}}^2 \\
	& \qquad + \sum_{\pmindex}( \norm{\rho_{\sigma,\mrm j}}{\Lnorm{\infty}}^2 + 1) \norm{\nabla \dt \vec u_{\sigma,\mrm j}}{\Lnorm{2}}^2 \\
	& \qquad + \mathcal H(\norm{\nabla\rho_{\sigma,\pm}^{1/2}}{\Lnorm{6}},\norm{\rho_{\sigma,\pm}}{\Hnorm{2}},\norm{\vec u_{\sigma,\pm}}{\Hnorm{2}},\norm{\rho_{\sigma,\pm}^{1/2}\dt \vec u_{\sigma,\pm}}{\Lnorm{2}}),
	\end{aligned}
\end{equation}
where we have applied interpolation inequality \eqref{ineq:interpolation-2} and
\begin{equation*}
\norm{\nabla \vec u_{\sigma,+--}}{\Lnorm{3}} \leq C  \norm{\nabla \vec u_{\sigma,+--}}{\Lnorm{2}}^{1/2} 
\bigl( \norm{\nabla \vec u_{\sigma,+--}}{\Lnorm{2}}^{1/2} + \norm{\nabla^2 \vec u_{\sigma,+--}}{\Lnorm{2}}^{1/2} \bigr).
\end{equation*}

\subsection{Uniform-in-$ \sigma $ estimates}\label{subsec:uniform-2}

In this subsection, we summarize the estimates above and derive the uniform-in-$ \sigma $ estimates. Again, let $ T>0 $ be the uniform-in-$\sigma$ life span of solutions to \eqref{eq:2CMPNS}, and denote $ \mathfrak M $ as in \eqref{def:bound-dsptn-cauchy}, which will  be shown to be independent of $ \sigma $. Also, $ C_{\mrm{in}} \in (0,\infty) $ represents a generic constant depending only on
\begin{gather*}
\underline{\rho} > 0, ~ \norm{\nabla \log \rho_{\sigma,\pm,0}}{\Lnorm{3}}, ~ \norm{\rho_{\sigma,\pm,0}^{1/2}}{\Hnorm{2}}, ~ \norm{\vec u_{\sigma,\pm,0}}{\Hnorm{1}}, ~ \norm{\vec u_{\sigma,\pm,1}}{\Lnorm{2}},
\end{gather*} 
but independent of $ \sigma $, which might be different from line to line, where $ \vec u_{\sigma, \pm, 1} = \dt \vec u_{\sigma,\pm}\big|_{t=0} $ are the initial data of $ \dt \vec u_{\sigma,\pm} $ defined by equation \subeqref{eq:2CMPNS}{2}.

Then following the same lines of proof, one can establish estimate \eqref{est:rho-total-1}, after choosing $ T $ small enough.

On the other hand, for $ \sigma \in (\sigma^*,\infty) $, with some $\sigma^* $ large enough, depending on $ C_{\mrm{in}} $, such that in \eqref{est:flat-h2},
\begin{gather*}
C \sigma \biggl( 1 + \sup_{\vec x \in \Omega} \dfrac{1}{\rho_{\sigma, +}\rho_{\sigma, -}} \biggr) \norm{\nabla_h(\rho_{\sigma,+}\rho_{\sigma,-})^{1/2}}{\Lnorm{6}}^2 \norm{\vec u_{\sigma,+--}}{\Lnorm{2}}^2\\
 \leq \dfrac{\sigma^2}{2} \norm{\rho_{\sigma, +}\rho_{\sigma, -}(\vec u_{\sigma,+}-\vec u_{\sigma,-})}{\Lnorm{2}}^2.
\end{gather*}
That is, for $ \sigma \geq \sigma^* $, \eqref{est:flat-h2} implies
\begin{equation}\label{est:flat-h2-final}
\begin{aligned}
& \norm{\vec u_{\sigma,\pm}}{\Hnorm{2}}^2 + \sigma^2 \norm{\rho_{\sigma, +}\rho_{\sigma, -}(\vec u_{\sigma,+} - \vec u_{\sigma,-})}{\Lnorm{2}}^2  \\
& \qquad  + \sigma \int \rho_{\sigma,+}\rho_{\sigma,-} \bigl( |\vec u_{\sigma,+} - \vec u_{\sigma,-}|^2 + |\nabla (\vec u_{\sigma,+} - \vec u_{\sigma,-})|^2 \bigr) \idx \\
& \quad \leq \mathcal H(\norm{\rho_{\sigma,\pm}}{\Lnorm{\infty}}, \norm{\nabla \rho_{\sigma,\pm}}{\Lnorm{2}},
\norm{\rho_{\sigma,\pm}^{1/2}\vec u_{\sigma,\pm}}{\Lnorm{2}}, \norm{\nabla\vec u_{\sigma,\pm}}{\Lnorm{2}}, \\
&\qquad \qquad \norm{\rho_{\sigma, \pm}^{1/2}\dt \vec u_{\sigma,\pm}}{\Lnorm{2}}). \\
\end{aligned}
\end{equation}

Then after collecting \eqref{est:002} and \eqref{est:001-flat}, substituting \eqref{ineq:interpolation-1} and \eqref{est:flat-h2-final}, and applying Young's inequality, one has
\begin{equation}\label{energy-est:total-flat}
	\begin{aligned}
	& \dfrac{d}{dt} E_{\sigma,2} + D_{\sigma,2} + \sigma \int \rho_{\sigma,+} \rho_{\sigma,-} |\dt (\vec u_{\sigma,+}-\vec u_{\sigma,-})|^2  \idx \\
	& \quad \leq \mathfrak c_3 \sigma \int \rho_{\sigma,+} \rho_{\sigma,-} |\vec u_{\sigma,+}-\vec u_{\sigma,-}|^2 \idx + \mathcal H( E_\sigma, E_{\sigma,2}, C_\mrm{in} ),
	\end{aligned}
	\end{equation}
	where $ \mathfrak c_3 \in (0,\infty) $ is a constant depending on $ C_\mrm{in} $, but independent of $ \sigma $, and
	\begin{align}
	& \begin{aligned} & E_{\sigma,2} := \sum_{\pmindex} \biggl\lbrace \norm{\rho_{\sigma,\mrm j}^{1/2} \dt \vec u_{\sigma,\mrm j}}{\Lnorm{2}}^2 + \norm{\nabla \vec u_{\sigma,\mrm j}}{\Lnorm{2}}^2\biggr\rbrace, \end{aligned} \\
	& \begin{aligned} & D_{\sigma,2} : = \sum_{\pmindex} \bigg\lbrace \mu_\mrm j \norm{\nabla \dt \vec u_{\sigma,\mrm j}}{\Lnorm{2}}^2 +  (\lambda_\mrm j + \mu_+\mrm j) \norm{\dv \dt \vec u_{\sigma,\mrm j}}{\Lnorm{2}}^2 
	\biggr\rbrace.
	\end{aligned}
	\end{align}
Consequently, \eqref{cnsvt:energy} and \eqref{energy-est:total-flat} yield that, there is some $ T^{**} \in (0,T] $, independent of $ \sigma $, such that for any $ t \in (0,T^{**}) $,
\begin{equation}\label{est:total-final-flat}
	\begin{aligned}
	& \sup_{0\leq s\leq t} \bigl\lbrace E_\sigma(s) + E_{\sigma,2}(s) \bigr\rbrace + \int_0^t \bigl\lbrace D_\sigma(s) + D_{\sigma,2}(s) \bigr\rbrace \,ds \\
	& \qquad + \sigma \int_0^t \biggl\lbrace \norm{(\vec u_{\sigma,+}-\vec u_{\sigma,-})(s)}{\Lnorm{2}}^2 + \norm{\dt (\vec u_{\sigma,+}-\vec u_{\sigma,-})(s)}{\Lnorm{2}}^2 \biggr\rbrace \,ds\\
	& \quad\leq C_\mrm{in},
	\end{aligned}
	\end{equation}
	which, in particular, implies, together with \eqref{est:flat-h2-final} and \eqref{est:flat-h3}
	\begin{equation}\label{est:update-bound-flat-01}
	\begin{gathered}
	\sup_{0\leq s\leq T^{**}}\norm{\dt \vec u_{\sigma,\pm}(s)}{\Lnorm{2}}^2+ \int_0^{T^{**}} \biggl\lbrace \norm{\nabla \vec u_{\sigma,\pm}(s)}{\Hnorm{2}}^2 \\ + \norm{\nabla \dt \vec u_{\sigma,\pm}(s)}{\Lnorm{2}}^2 \biggr\rbrace \,ds  \leq \mathfrak M^{**}(C_{\mrm{in}}).\end{gathered}
	\end{equation}
	
Now, while keeping $ \sigma \geq \sigma^* $, we update $ \mathfrak M $ to $ \mathfrak M^{**}(C_\mrm{in}) $, and $ T $ to $ T^{**} $ accordingly. To this end, we conclude this section with \eqref{est:rho-total-1} and the following uniform estimates:
\begin{equation}\label{est:u-total-2}
\begin{aligned}
& \sup_{0\leq s\leq T} \biggl\lbrace \norm{\vec u_{\sigma,\pm}(s)}{\Hnorm{2}}^2 + \norm{\dt \vec u_{\sigma,\pm}(s)}{\Lnorm{2}}^2 + \sigma^2 \norm{(\vec u_{\sigma,+} - \vec u_{\sigma,-})(s)}{\Lnorm{2}}^2 \\
& \qquad + \sigma \norm{(\vec u_{\sigma,+} - \vec u_{\sigma,-})(s)}{\Hnorm{1}}^2 \biggr\rbrace \\
& \quad + \int_0^T \biggl\lbrace \norm{\vec u_{\sigma,\pm}(s)}{\Hnorm{3}}^2 + \norm{\dt \vec u_{\sigma,\pm}(s)}{\Hnorm{1}}^2 \biggr\rbrace \,ds \\
& \quad + \sigma^2 \int_0^T \norm{(\vec u_{\sigma,+} - \vec u_{\sigma,-})(s)}{\Hnorm{1}}^2 \,ds \\
& \quad + \sigma \int_0^T \biggl\lbrace \norm{(\vec u_{\sigma,+}-\vec u_{\sigma,-})(s)}{\Hnorm{2}}^2 +  \norm{\dt (\vec u_{\sigma,+}-\vec u_{\sigma,-})(s)}{\Lnorm{2}}^2 \biggr\rbrace \,ds\\
& \qquad \leq C_{\mrm{in},T}.
\end{aligned}
\end{equation}

\section{Passing the limit $ \sigma \rightarrow \infty $}\label{sec:limit}

We are ready to pass the limit $ \sigma \rightarrow \infty $ in system \eqref{eq:2CMPNS}. From the uniform estimates in \eqref{est:rho-total-1}, \eqref{est:u-total-1}, and \eqref{est:u-total-2} for $ \Omega = \mathbb T^3 $ and $ \Omega = \mathbb{T}^2 \times (0,1) $, respectively, one can conclude that, there exist $ \rho_{\pm} $ and $ \vec u $ in the corresponding spaces, such that
\begin{equation}\label{limit:general-1}
\begin{aligned}
	\vec u_{\sigma,+} - \vec u_{\sigma,-} & \rightarrow 0 & \quad \text{in} \quad & L^\infty(0,T;H^1(\Omega))\cap L^2(0,T;H^2(\Omega)), \\
	\dt (\vec u_{\sigma,+} - \vec u_{\sigma,-})  & \rightarrow 0 & \quad \text{in} \quad & L^2(0,T;L^2(\Omega)),\\
	\rho_{\sigma, \pm} & \rightarrow \rho_{\pm} & \quad \text{in} \quad &  C(0,T;H^1(\Omega)),\\
	\rho_{\sigma, \pm} & \buildrel\ast\over\rightharpoonup \rho_{\pm} & \quad \text{weak-$\star$ in} \quad &  L^\infty(0,T;H^2(\Omega)), \\
	\dt \rho_{\sigma, \pm} & \buildrel\ast\over\rightharpoonup \dt \rho_{\pm} & \quad \text{weak-$\star$ in} \quad & L^\infty(0,T;L^2(\Omega)), \\
	\vec u_{\sigma,\pm} & \rightarrow  \vec u & \quad \text{in} \quad & C(0,T;H^1(\Omega)) \cap L^2(0,T;H^2(\Omega)), \\
	\vec u_{\sigma,\pm} & \rightarrow  \vec u & \quad \text{weakly in} \quad & L^2(0,T;H^3(\Omega)), \\
	\vec u_{\sigma,\pm} & \rightarrow  \vec u & \quad \text{weak-$\star$ in} \quad & L^\infty(0,T;H^2(\Omega)), \\
	\dt \vec u_{\sigma,\pm} & \rightharpoonup  \dt \vec u & \quad \text{weakly in} \quad & L^2(0,T;H^1(\Omega)), \\
	\dt \vec u_{\sigma,\pm} & \buildrel\ast\over\rightharpoonup  \dt \vec u & \quad \text{weak-$\star$ in} \quad & L^\infty(0,T;L^2(\Omega)), 
\end{aligned}
\end{equation}
as $ \sigma \rightarrow \infty $. Meanwhile, after adding the $ + $-equation and the $ - $-equation of \subeqref{eq:2CMPNS}{2} together, one obtains
\begin{equation}\label{eq:conv-1}
	\begin{aligned}
	& \dt (\rho_{\sigma,+} \vec u_{\sigma,+} + \rho_{\sigma, -}\vec u_{\sigma,-}) +\dv(\rho_{\sigma, +}\vec u_{\sigma,+}\otimes \vec u_{\sigma,+} + \rho_{\sigma, -}\vec u_{\sigma,-}\otimes \vec u_{\sigma,-})\\
	& \qquad + \nabla (p_{\sigma,+} + p_{\sigma,-}) = \dv (\mathbb S_{\sigma,+} + \mathbb S_{\sigma,-}).
	\end{aligned}
\end{equation} 
Passing $ \sigma \rightarrow \infty $ and utilizing \eqref{limit:general-1} in \subeqref{eq:2CMPNS}{1} and \eqref{eq:conv-1} lead to the convergences of them to \eqref{eq:limiting-eq} in the sense of distribution.

Thus we have verified the asymptotic limit \eqref{eq:limiting-eq} of system \eqref{eq:2CMPNS}. 

Our remaining goal is to investigate the asymptote of
$$
\sigma \rho_{\sigma,\pm} \rho_{\sigma,\mp} (\vec u_{\sigma,\mp} - \vec u_{\sigma,\pm})
$$
in \subeqref{eq:2CMPNS}{2}, which, in general, does not vanish as $ \sigma \rightarrow \infty $. 
Notice that, from \eqref{eq:+--}, we have
\begin{equation}\label{eq:conv-2}
\begin{aligned}
& \sigma( \vec u_{\sigma,-} - \vec u_{\sigma,+}) = \dfrac{1}{ \rho_{\sigma, +} + \rho_{\sigma, -}}\biggl( \dt (\vec u_{\sigma,+} - \vec u_{\sigma,-}) + \vec u_{\sigma,+} \cdot \nabla \vec u_{\sigma,+} \\
& \qquad - \vec u_{\sigma,-} \cdot \nabla \vec u_{\sigma,-} 
 + \dfrac{\nabla p_{\sigma,+} - \dv \mathbb S_{\sigma,+}}{\rho_{\sigma,+}} - \dfrac{\nabla p_{\sigma,-} - \dv \mathbb S_{\sigma,-}}{\rho_{\sigma,-}} \biggr).
\end{aligned}
\end{equation}
Therefore, passing $ \sigma \rightarrow \infty $ in \eqref{eq:conv-2} implies that, in the sense of distribution,
\begin{equation}\label{limit:001}
	\begin{aligned}
	& \sigma( \vec u_{\sigma,-} - \vec u_{\sigma,+}) \buildrel\text{$\sigma\rightarrow\infty$}\over\rightharpoonup \dfrac{1}{\rho_+ + \rho_-} \biggl\lbrack \dfrac{R_+ \gamma_+ }{\gamma_+ - 1} \nabla \rho_+^{\gamma_+ - 1} - \dfrac{R_- \gamma_- }{\gamma_- - 1} \nabla \rho_-^{\gamma_- - 1}  \\
	& \qquad + \biggl( \dfrac{\mu_+}{\rho_+} - \dfrac{\mu_-}{\rho_-}\biggr)\dv (\nabla \vec u + \nabla^\top \vec u) + \biggl( \dfrac{\lambda_+}{\rho_+} - \dfrac{\lambda_-}{\rho_-}\biggr) \nabla \dv \vec u \biggr\rbrack = \mathfrak S.
	\end{aligned}
\end{equation}
Consequently, after passing $ \sigma \rightarrow \infty $ in \subeqref{eq:2CMPNS}{2} in the sense of distribution, one verifies \eqref{eq:conv-3}. 

\bibliographystyle{plain}

\end{document}